\documentclass[lettersize,journal,10pt,twoside]{IEEEtran}
\IEEEoverridecommandlockouts
\usepackage{cite}
\usepackage{amsmath,amssymb,amsfonts}
\usepackage{algorithmic}
\usepackage{bm}
\usepackage{graphicx}
\usepackage{textcomp}
\usepackage{xcolor}
\usepackage{enumitem}
\graphicspath{{./Figure/}}
\usepackage[ruled,linesnumbered]{algorithm2e}
\usepackage{color}
\usepackage{epstopdf}
\usepackage{upgreek}
\makeatletter
\usepackage{array}
\usepackage{stfloats}
\usepackage{url}
\usepackage{verbatim}
\usepackage{multirow}
\usepackage{booktabs}
\usepackage{makecell}
\usepackage{subcaption}
\usepackage{caption}
\usepackage{cuted} 
\usepackage{float}
\usepackage{bm}
\usepackage{upgreek}
\usepackage{mathrsfs}
\usepackage{type1cm}
\usepackage[T1]{fontenc} 
\renewcommand{\maketag@@@}[1]{\hbox{\m@th\normalsize\normalfont#1}}
\usepackage{balance}
\usepackage{soul,color}
\usepackage[normalem]{ulem}
\newcommand\redst{\bgroup\markoverwith{\textcolor{red}{\rule[0.2ex]{2pt}{1pt}}}\ULon}

\makeatother
\usepackage[letterpaper,bottom=1.01in,top=0.751in,left=0.631in,right=0.631in]{geometry}
\newtheorem{Theorem}{Theorem}

\title
{Design of Uplink ISAC Systems with Cooperative Sensing: Power Control and Receive Beamforming}
\author{Ling He, \IEEEmembership{Graduate Student Member},
Vaibhav Kumar,  \IEEEmembership{Member, IEEE},
\\
Roberto Bomfin, \IEEEmembership{Member, IEEE},
Yingyang Chen, \IEEEmembership{Senior Member, IEEE},
\\
Miaowen Wen, \IEEEmembership{Senior Member, IEEE},
and 
Marwa Chafii, \IEEEmembership{Senior Member, IEEE},
\thanks{The work of Ling He was supported by the China Scholarship Council (CSC) under Grant No. 202406150151, during her visiting Ph.D. study at New York University Abu Dhabi (NYUAD), UAE.
The work of Vaibhav Kumar and Marwa Chafii was supported by the Center for Cyber Security through NYUAD Research Institute under Award G1104. The work of Marwa Chafii was also supported by Tamkeen under the Research Institute NYUAD grant CG017.
The work of Yingyang~Chen was supported by the Guangdong Basic and Applied Basic Research Foundation under Grant 2024B1515020002.
The work of Miaowen~Wen was supported by the Fundamental Research Funds for the Central Universities under Grant 2024ZYGXZR076.}
\thanks{Ling He is with the School of Electronic and Information Engineering, South China University of Technology, Guangzhou 510640, China, and also with the Wireless Research Lab, Engineering Division, New York University (NYU) Abu Dhabi, UAE (e-mail: eelinghe@mail.scut.edu.cn).}
\thanks{Vaibhav Kumar, Roberto Bomfin, and Marwa Chafii are with the Wireless Research Lab, Engineering Division, NYU Abu Dhabi, UAE. Marwa Chafii is also with NYU WIRELESS, NYU Tandon School of Engineering, New York, USA (e-mail: vaibhav.kumar@ieee.org; 
roberto.bomfin@nyu.edu;
marwa.chafii@nyu.edu).}
\thanks{Yingyang Chen is with the College of Information Science and Technology, Jinan University, Guangzhou 510632, China (e-mail: chenyy@jnu.edu.cn).}
\thanks{Miaowen Wen is with the School of Electronic and Information Engineering and the Guangdong Provincial Key Laboratory of Short-Range Wireless Detection and Communication, South China University of Technology, Guangzhou 510640, China (e-mail: eemwwen@scut.edu.cn).}
}

\begin{document}
\maketitle

\begin{abstract}
Integrated sensing and communication (ISAC) has emerged as a key paradigm for next-generation wireless systems, which allows wireless resources to be used for data transmission and target sensing simultaneously.
In this paper, multi-user collaborative target detection in the uplink ISAC system is investigated.
To incorporate the target sensing functionality, the system relies on the reuse of uplink signals from the communication users.
Specifically, we analyze an uplink multi-user single-input multiple-output (MU-SIMO) communication system with bistatic sensing. 
Using the channel statistics, we formulate the problem of joint optimal pilot and data power allocation to maximize the uplink ergodic sum rate while meeting communication and sensing quality-of-service (QoS) requirements. 
To address this non-convex problem, we propose an alternating optimization (AO)-based iterative framework, where the joint power allocation problem is decomposed into two sub-problems. Specifically, the pilot power allocation is optimized using a penalty dual decomposition (PDD)-based gradient ascent algorithm, while the data power allocation is solved via successive convex approximation (SCA). Once the long-term power allocation is determined, the base station (BS) estimates the instantaneous channels using a minimum mean-squared error (MMSE) estimator. Subsequently, based on the estimated instantaneous channel state information (CSI), the receive beamforming for communication users is optimized via another SCA-based method to maximize the sum rate. Meanwhile, the optimal receive beamforming for the target is obtained in closed-form through eigenvalue decomposition (EVD).
We provide comprehensive simulation results to analyze the performance of the proposed iterative algorithm and to demonstrate its dependence on different design parameters. 
Our results also confirm the superiority of the proposed resource allocation approach over conventional benchmark schemes. 
\end{abstract}

\begin{IEEEkeywords}
	Integrated sensing and communication (ISAC), sum rate, probability of detection, channel estimation, penalty dual decomposition, successive convex approximation.
\end{IEEEkeywords}

\section{Introduction}
\IEEEPARstart{T}{he} explosive growth of wireless services and the widespread deployment of smart devices are reshaping the architecture of modern wireless networks. 
With the emergence of applications such as autonomous driving, smart cities, industrial automation, and intelligent healthcare~\cite{23-COMST-6G,21-NETW-ISAC,25-OJCOMS-ISAC-RIS}, future wireless systems are expected to not only deliver high-speed and reliable communication services but also offer high-resolution and real-time environmental sensing capabilities~\cite{20-NETW-6G,22-COMST-ISAC,23-IoTJ-ISAC}. 
Traditionally, communication systems and sensing systems have been developed and deployed independently, relying on different hardware, spectrum resources, and signal processing techniques.
This segregated approach results in low spectrum utilization, high system costs, and poor functional integration, which seriously restricts the collaborative development and large-scale expansion of wireless networks. 
Fortunately, with the advancement of wireless communication technologies, sensing and communication tend to be similar in terms of system architectures and signal processing algorithms~\cite{19-TVT-ISAC}, which enables the joint design and operation of communication and sensing functions within the same wireless system.
In this direction, integrated sensing and communication (ISAC)~\cite{22-JSAC-ISAC} has emerged as a groundbreaking technology for next-generation wireless networks to cater to the ever-growing demand for high-quality wireless connectivity and accurate sensing capabilities.
By sharing spectrum, hardware, and signaling strategies, ISAC systems can improve spectral efficiency and reduce hardware duplication compared with separately deployed sensing and communication systems~\cite{21-VTM-ISAC-PMN}.

While the majority of existing ISAC research has focused on downlink transmission, where the base station (BS) actively emits either a dual-function waveform or a superposition of communication and sensing signals over the same spectrum, this conventional monostatic paradigm nevertheless requires dedicated downlink transmission resources in terms of power and waveform design degrees of freedom~\cite{22-JSAC-ISAC,23-TVT-ISAC}. By contrast, uplink ISAC has recently emerged as a promising alternative~\cite{22-IDAC-UL}. In the uplink framework, the BS operates as a bistatic receiver that exploits users’ uplink communication signals for target sensing. This uplink-centric paradigm enables sensing without additional downlink transmit power or signaling overhead, and naturally supports distributed or cooperative sensing scenarios, such as vehicular networks and Internet-of-Things (IoT) applications. Nevertheless, uplink ISAC also introduces unique challenges, including decentralized signal transmission from multiple users, more pronounced multi-user interference, and the need for joint optimization of user transmit power and BS receive beamforming to balance communication and sensing performance. Motivated by these considerations, this paper investigates the design and optimization of multi-user uplink ISAC systems with cooperative sensing.

\subsection{Related Works}
In recent years, ISAC technology has rapidly emerged as a focus of intensive research across both academia and industry. 
The theoretical foundation of ISAC lies in the waveform duality and functional compatibility between communication signals and sensing waveforms~\cite{11-PI-ISAC-Wave}. 
The waveforms widely used in communication systems, such as orthogonal frequency-division multiplexing (OFDM)~\cite{20-TCOM-ISAC}, orthogonal time frequency space (OTFS)~\cite{21-JSTSP-ISAC-OTFS}, and multiple-input multiple-output (MIMO) architectures~\cite{22-COMST-ISAC-MN}, inherently possess time-frequency properties (\emph{e.g.}, orthogonality and spatial sparsity) suitable for high-precision parameter estimation. 
Consequently, ISAC systems offer new possibilities for joint waveform design~\cite{22-TVT-ISAC-RIS}, resource allocation~\cite{23-TWC-ISAC-UAV}, and beamforming strategies~\cite{23-TVT-ISAC}, enabling the system to achieve a flexible trade-off between communication quality and sensing accuracy.

Research in ISAC systems has primarily concentrated on the aforementioned downlink scenarios, considering fundamental trade-offs between communication and sensing performance~\cite{11-PI-ISAC-Wave,18-TSP-ISAC,22-JSAC-ISAC-FD} and joint optimization theories for cross-domain resource coordination~\cite{23-TWC-ISAC,24-TWC-ISAC-Pilot,22-TSP-ISAC-RIS}.
In the early stages of ISAC research, Sturm and Wiesbeck~\cite{11-PI-ISAC-Wave} explored the multiplexing potential of radar waveforms in communications, laying the foundation for ISAC signal design.
This work demonstrated the feasibility of joint waveform reuse and spectrum sharing, providing critical insights for enhancing spectral efficiency and hardware integration.
Building upon this pioneering demonstration of cross-system signal compatibility, subsequent research efforts have evolved toward more flexible and reconfigurable waveform architectures.
For instance, Liu \emph{et al.}~\cite{18-TSP-ISAC} proposed a dual-functional MIMO-OFDM waveform design framework that supports both high-resolution radar sensing and reliable wireless communication.
By dynamically adjusting weighting coefficients, this framework achieved adaptive trade-offs between communication and sensing.
While such frameworks leveraged spatial degrees of freedom for dual-functional operation, the practical deployment in full-duplex scenarios requires careful consideration of self-interference impacts.
To ensure the quality-of-service (QoS) of the ISAC system, the authors in~\cite{22-JSAC-ISAC-FD} designed a comprehensive framework for waveform design and performance analysis in full-duplex ISAC systems. Effective suppression of self-interference was achieved by appropriate power control of radar and communication signals.

However, waveform design alone is insufficient to meet the optimal dual performance requirements of both communication and sensing systems, especially in multi-user scenarios and highly dynamic environments. 
Consequently, an increasing number of studies focus on the optimization of system resources in downlink systems. More effective collaboration and performance enhancement between communication and sensing functions are achieved by jointly optimizing central resource management strategies, such as power allocation and beamforming design.
For instance, An \emph{et al.}~\cite{23-TWC-ISAC} investigated a power control strategy under the target detection probability and communication achievable rate constraints.
On the basis of power control, the authors in~\cite{24-TWC-ISAC-Pilot} further studied the spatial flexibility enabled by beamforming design and proposed a novel protocol for target detection and information transmission. By jointly designing the pilot matrix, training duration, and transmit beamforming, the proposed scheme effectively enhances target sensing performance while maintaining communication efficiency.
As ISAC systems gradually extend to multi-user environments, it becomes necessary to address more complex joint optimization problems. 
For example, Yu \emph{et al.}~\cite{22-TSP-ISAC-RIS} investigated a location-aware beamforming design scheme for multi-user ISAC systems, avoiding high channel estimation overhead by jointly optimizing the active and passive beamforming designs.

In line with the increasing demands on uplink QoS, research has recently extended to multi-user uplink scenarios.
Unlike downlink systems, uplink scenarios involve multiple users simultaneously and independently transmitting signals to the BS. 
This decentralized nature of signal transmission introduces more complex design challenges, particularly in terms of resource management, interference suppression, and coordination between sensing and communication.
To tackle these challenges, recent studies have started exploring the joint optimization frameworks for multi-user uplink ISAC systems~\cite{24-TVT-ISAC,25-TCCN-ISAC-UL}.
For instance, the authors in~\cite{24-TVT-ISAC} investigated the frequency-time
resource allocation scheme to minimize the Cramér–Rao Bound (CRB) of 5G New Radio (NR) signals on the transmitter side.
As for the receiver side, Li \emph{et al.}~\cite{25-TCCN-ISAC-UL} proposed a cooperative sensing scheme based on multiple roadside units, in which the receive beamforming is optimized to minimize the sensing beampattern matching error while satisfying the communication QoS.

\subsection{Contributions and Organization}
Most of the aforementioned studies adopt the BS as the active signal source, thereby overlooking the sensing potential inherent in user-transmitted signals during uplink communication. In practice, uplink pilot and data signals can simultaneously support communication and target sensing. Ignoring this information may constrain sensing performance, particularly in multi-user uplink scenarios where rich spatial and signal diversity is available. Moreover, existing uplink ISAC works predominantly address either the transmitter-side design~\cite{24-TVT-ISAC} or the receiver-side processing~\cite{25-TCCN-ISAC-UL} in isolation, without considering their joint optimization. Since pilot and data transmissions jointly influence both communication reliability and sensing accuracy, coordinated resource allocation and beamforming design are essential for achieving favorable performance trade-offs. Nevertheless, the intrinsic coupling between communication and sensing objectives, together with the shared use of limited resources such as spectrum and transmit power, poses significant challenges for the efficient design of multi-user uplink ISAC systems.

To address these challenges, this paper investigates the problem of uplink achievable sum-rate maximization in a multi-user bistatic ISAC system while guaranteeing both communication and sensing QoS.
The considered system comprises multiple single-antenna uplink users and a multi-antenna BS that simultaneously performs user communication and target sensing.
Specifically, we propose a two-stage optimization framework: first, the pilot and data power allocations are optimized based on statistical channel state information (CSI); subsequently, leveraging minimum mean-squared error (MMSE) channel estimates, the receive beamforming is jointly designed to maximize the instantaneous achievable sum rate.
The main contributions of this paper are summarized as follows:

\begin{itemize}
    \item 
    We propose a two-stage design for a multi-user uplink ISAC system, consisting of long-term power allocation and instantaneous receive beamforming optimization. To this end, we derive a tractable lower bound on the average achievable sum rate and an exact expression for the target detection probability.
    
    \item 
    We formulate a joint pilot and data power allocation problem based on statistical CSI to maximize the derived sum-rate lower bound. An alternating optimization (AO) algorithm, incorporating successive convex approximation (SCA) and a penalty dual decomposition (PDD)-based gradient ascent method, is developed to obtain a stationary solution.
    
    \item 
    Based on the long-term power allocation, the BS optimizes receive beamforming to maximize the instantaneous sum rate. Specifically, the communication beamforming is optimized via SCA, whereas the target beamforming is derived in closed-form through eigenvalue decomposition (EVD).
    
    \item 
    Extensive simulation results are provided to validate the effectiveness of the proposed solution and to gain insights into how different design parameters influence the overall system performance. We also compare the proposed design with the maximal-ratio combining (MRC)-based and zero-forcing (ZF)-based receive beamforming schemes, demonstrating the superiority of the proposed approach.
\end{itemize}

The organization of this paper is as follows. The system model and the formulation of two optimization problems for the uplink multi-user ISAC system are presented in Section~II. Section~III develops the corresponding solution methods for these problems. Simulation results are provided in Section~IV to validate the effectiveness of the proposed designs. Finally, Section~V concludes the paper.

\textit{Notation}: 
Uppercase boldface and lowercase boldface letters represent matrices and vectors, respectively. 
$(\cdot)^{\mathsf T}$, $(\cdot)^{\mathsf H}$, and $ (\cdot)^{*} $ represent (ordinary) transpose, Hermitian transpose, and complex conjugate, respectively.
$ \left\| \cdot  \right\|_2 $ denotes the $ l_2 $-norm of a vector.
$ \mathbb{E}\left\{x\right\} $ denotes the expectation of $ x $. 
$ \mathbf{I} $ represents the identity matrix.
$ x \sim \mathcal{CN}(a, b) $ and $x \sim \mathcal{N}(a, b)$ mean that the scalar $x$ follows a complex Gaussian distribution and a real Gaussian distribution, respectively, with mean $a$ and variance $b$.
$ \text{tr}(\cdot) $ denotes the trace of a matrix.
$ \text{diag} (\cdot)  $ denotes a diagonal matrix.
$ \text{blkdiag} (\cdot) $ denotes a block diagonal matrix.
$\det \left( \mathbf{A } \right)$ denotes the determinant of $\mathbf{A}$.
$ \Re\{\cdot\} $ means taking the real part.
$\Pr(A|B)$ denotes the conditional probability of event A occurring given that event B has occurred. 
The complex-valued gradient of a real-valued function $f(\mathbf X)$, with respect to (w.r.t.) $\mathbf X$ is denoted by $\nabla_{\mathbf X} f(\mathbf X)$.
Finally, $ \mathbf{A} \in \mathbb{C}^{M\times N} $ indicates that $ \mathbf{A} $ is a complex-valued matrix with dimensions $ {M\times N} $.

\section{System Model and Problem Formulation}

\begin{figure}[t] 
	\centering
	\includegraphics[width=0.35\textwidth]{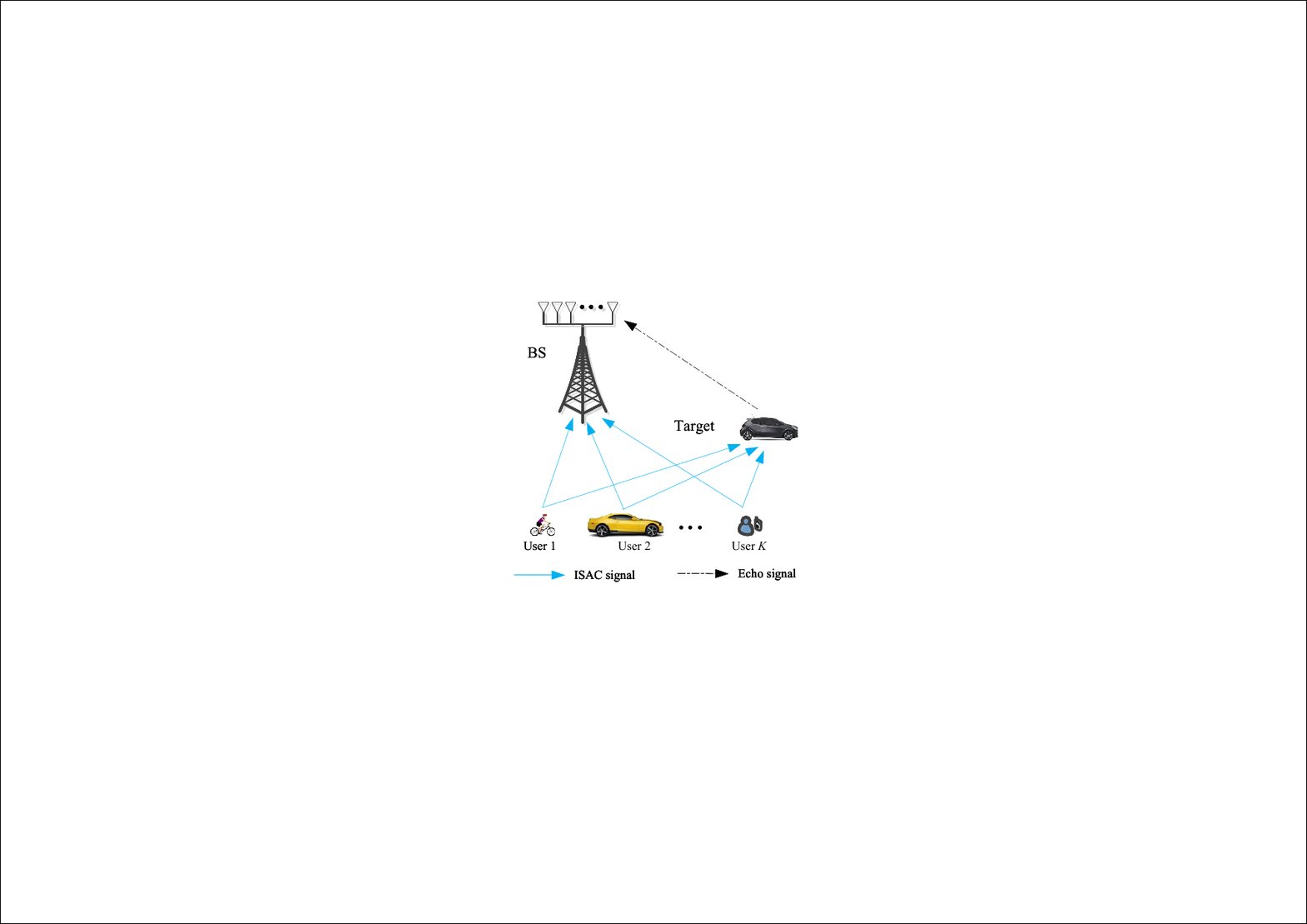}
	\caption{Multi-user uplink communication assisted target detection in ISAC systems.}
	\label{Fig_system}
\end{figure} 

\begin{figure}[t] 
	\centering
	\includegraphics[width=0.45\textwidth]{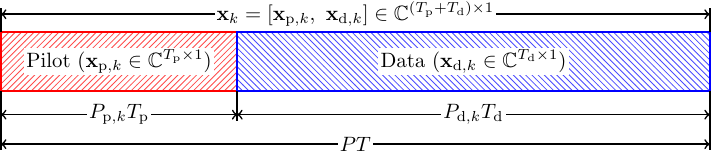}		
	\caption{Frame structure of the pilot and data transmitted by user-$k$.}
	\label{Fig_frame}
\end{figure}  

We consider a bistatic ISAC system, shown in Fig.~\ref{Fig_system}, consisting of a BS equipped with a uniform linear array (ULA) of ${{N}_{\text{b}}}$ receive antenna elements, $K$ single-antenna communication users, and a point-like target.
Specifically, the BS receives the users’ uplink communication signals for data decoding, while simultaneously leveraging their echo components for target detection. 
We assume that the sensing task aims to monitor a predefined region of interest~\cite{23-TWC-ISAC,24-TWC-ISAC-Pilot}.
Detection is performed at the BS via a binary hypothesis test (\emph{e.g.}, a Neyman--Pearson detector~\cite{98-SP-Dectect}) using the received signals corresponding to this region, without assuming knowledge of the instantaneous target location.

While this work focuses on a single-target model to establish the fundamental optimization framework and enable a clean analytical characterization of the sensing performance, the proposed approach can be extended to multi-target scenarios in a conceptually straightforward manner. Specifically, the received echo can be modeled as a linear superposition of multiple target components, while the overall optimization framework remains structurally similar. Nevertheless, such a generalization gives rise to several non-trivial challenges, including increased inter-target interference, coupling effects in the sensing metrics, spatial resolvability constraints, and more sophisticated detection rules, which are left for future investigation.

Moreover, we consider a discrete-time ISAC system operating over a quasi-static flat-fading channel, where the channel remains constant during each coherence block of $T$ symbol intervals and varies independently between consecutive blocks.
The transmitted ISAC signal of user-$k$ is given by $\mathbf{x}_k = \big[ \mathbf{x}_{\mathrm{p},k}^{\mathsf T},\, \mathbf{x}_{\mathrm{d},k}^{\mathsf T} \big]^{\mathsf T} \in \mathbb{C}^{T \times 1}, \quad k \in \mathcal{K}=\{1,\ldots,K\}$,
which comprises a pilot signal $\mathbf{x}_{\mathrm{p},k}$ and a data signal $\mathbf{x}_{\mathrm{d},k}$, as illustrated in Fig.~\ref{Fig_frame}. The pilot signal is assumed to be known at the BS, and the pilots transmitted by different users are mutually orthogonal. The durations of $\mathbf{x}_{\mathrm{p},k}$ and $\mathbf{x}_{\mathrm{d},k}$ are denoted by $T_{\mathrm{p}}$ and $T_{\mathrm{d}}$, respectively, with $T_{\mathrm{p}} + T_{\mathrm{d}} = T$. The corresponding transmit powers are denoted by $P_{\mathrm{p},k}$ and $P_{\mathrm{d},k}$. Accordingly, the total transmit energy of user-$k$ over one coherence block is given by $P_{\mathrm{p},k} T_{\mathrm{p}} + P_{\mathrm{d},k} T_{\mathrm{d}} \leq P T$, where $P$ denotes the maximum transmit power budget per coherence block.

\subsection{Channel Estimation}
The BS receives pilot signals transmitted by $K$ users to estimate the channel. Subsequently, the channel estimates are used to demodulate the users' message and also to detect the presence of the target. 
The pilot signal received by the BS after collecting $T_{\text{p}}$ pilot symbols can be written as
\begin{align}
	{{\mathbf{Y}}_{\text{p}}}=\sum\nolimits_{k\in \mathcal{K}}{\sqrt{{{P}_{\text{p},k}}}\left( {{\mathbf{h}}_{k}}+{{\mathbf{g}}_{k}} \right)}\mathbf{x}_{\text{p},k}^{\mathsf T}+{{\mathbf{N}}_{\text{p}}},
\end{align}
where $\mathbf{Y}_{\text{p}}\in\mathbb{C}^{N_{\text{b}}\times T_{\text{p}}}$ and
$\mathbf{x}_{\text{p},k}\in\mathbb{C}^{T_{\text{p}}\times1}$. 
Moreover, $\mathbf{h}_{k}\sim \mathcal{C}\mathcal{N}\left(\mathbf{0},\mathbf{R}_{\text{h},k}\right)\in\mathbb{C}^{N_{\text{b}}\times1}$ is the channel vector from user-$k$ to the BS,
and $\mathbf{g}_{k} = \sqrt{\alpha_{\text{RCS}}}\mathbf{a}\left(\theta_\text{t}\right)h_{\text{t},k} \sim \mathcal{C}\mathcal{N}\left(\mathbf{0},\mathbf{R}_{\mathrm{g},k}\right) \in\mathbb{C}^{N_{\text{b}}\times1}$ represents the echo channel vector for the path of user-$k$ $\rightarrow$  target $\rightarrow$  BS.\footnote{While the second-order statistic $\mathbf{R}_{\mathrm g,k}$ parameterizes the distribution under the competing hypotheses, it does not dictate the instantaneous target presence, which remains a random event inferred only from observations.} 
Here $\alpha_{\text{RCS}}$ denotes the reflection coefficient related to the radar-cross-section (RCS) of the target, and  
$\mathbf{a}\left(\theta_\text{t}\right)=[1,e^{-j2\pi d\sin\theta_\text{t}/\lambda},\ldots,e^{-j2\pi\left(N_{\text{b}}-1\right)d\sin\theta_\text{t}/\lambda}]^{\mathsf T}\in\mathbb{C}^{N_{\text{b}}\times1} $ represents the receive steering vector at the BS for the target, where $\theta_\text{t} $ denotes the azimuth angle of the target, $d$ denotes the spacing of array elements, and $\lambda$ denotes the wavelength. 
Additionally, $\mathbf{N}_{\text{p}}\in\mathbb{C}^{N_{\text{b}}\times T_{\text{p}}}$ is the additive white Gaussian noise (AWGN) at the BS during the pilot transmission phase, whose elements are independent and identically distributed (i.i.d.) and obey the distribution $\mathcal{C}\mathcal{N}(0,\sigma^2)$. 

By employing the known signal $\mathbf{x}_{\text{p},k}^{*}$ as the matched filter, the received signal at the BS after filtering can be expressed as
\begin{align}
	\label{eq:y_pk}
	{{\mathbf{y}}_{\text{p},k}}&={{\mathbf{Y}}_{\text{p}}}\mathbf{x}_{\text{p},k}^{*}=\sum\nolimits_{i\in \mathcal{K}}{\sqrt{{{P}_{\text{p},i}}}\left( {{\mathbf{h}}_{i}}+{{\mathbf{g}}_{i}} \right)\mathbf{x}_{\text{p},i}^{\mathsf T}\mathbf{x}_{\text{p},k}^{*}}+{{\mathbf{N}}_{\text{p}}}\mathbf{x}_{\text{p},k}^{*} 
	\notag  \\ 
	& =\sqrt{{{P}_{\text{p},k}}}{{T}_{\text{p}}}\left( {{\mathbf{h}}_{k}}+{{\mathbf{g}}_{k}} \right)+{{\mathbf{N}}_{\text{p}}}\mathbf{x}_{\text{p},k}^{*}, 
\end{align}
where ${{\mathbf{y}}_{\text{p},k}}\in {{\mathbb{C}}^{{{N}_{\text{b}}}\times 1}}$.
Based on the received signals in~\eqref{eq:y_pk}, the MMSE estimation is employed to estimate both $\mathbf{h}_{k}$ and $\mathbf{g}_{k}$. 
The MMSE estimator corresponding to $\mathbf{h}_{k}$ can be expressed as 
\begin{align}
	\mathbf{\hat{h}}_k=\sqrt{{{P}_{\text{p},k}}}{{\mathbf{R}}_{\text{h},k}}{{\left( {{P}_{\text{p},k}}{{T}_{\text{p}}}{\mathbf{R}}_k +\sigma^2\mathbf{I} \right)}^{-1}}{{\mathbf{y}}_{\text{p},k}},
\end{align}
where ${{\mathbf{R}}_k} = {{\mathbf{R}}_{\text{h},k}}+{{\mathbf{R}}_{\mathrm{g},k}}$. 
The covariance matrix of ${{\mathbf{\hat{h}}}_{k}}$ is obtained by
${{\mathbf{R}}_{\hat{\mathrm{h}},k}} ={{P}_{\text{p},k}}{{T}_{\text{p}}}{{\mathbf{R}}_{\text{h},k}}{{\left( {{P}_{\text{p},k}}{{T}_{\text{p}}} {\mathbf{R}}_k+\sigma^2\mathbf{I} \right)}^{-1}}{{\mathbf{R}}_{\text{h},k}} $. Thus, the covariance matrix of the channel estimation error ${\bm{\upepsilon }_{k} = {\mathbf{h}}_{k} - {\mathbf{\hat{h}}}_{k}}$ can be expressed as ${{\mathbf{R}}_{\upepsilon,k}}={{\mathbf{R}}_{\text{h},k}}-{{P}_{\text{p},k}}{{T}_{\text{p}}}{{\mathbf{R}}_{\text{h},k}}{{\left( {{P}_{\text{p},k}}{{T}_{\text{p}}}\mathbf{R}_k+\sigma^2\mathbf{I} \right)}^{-1}}{{\mathbf{R}}_{\text{h},k}}$. 
Similarly, the estimated channel of $\mathbf{g}_{k}$ is given by 
\begin{align}{{\mathbf{\hat{g}}}_{k}}=\sqrt{{{P}_{\text{p},k}}}{{\mathbf{R}}_{\mathrm{g},k}}{{\left( {{P}_{\text{p},k}}{{T}_{\text{p}}}{\mathbf{R}}_k +\sigma^2\mathbf{I} \right)}^{-1}}{{\mathbf{y}}_{\text{p},k}},
\end{align}
with the covariance matrix as ${{\mathbf{R}}_{\hat{\mathrm{g}},k}}={{P}_{\text{p},k}}{{T}_{\text{p}}}{{\mathbf{R}}_{\mathrm{g},k}}{{\left( {{P}_{\text{p},k}}{{T}_{\text{p}}} {\mathbf{R}}_k+\sigma^2\mathbf{I} \right)}^{-1}}{{\mathbf{R}}_{\mathrm{g},k}}$.
Then, the channel estimation error of $\mathbf{g}_{k}$ is ${{\bm{\upvarepsilon }}_{k} = {\mathbf{g}}_{k} - {\mathbf{\hat{g}}}_{k}}$, and its covariance matrix as ${{\mathbf{R}}_{\upvarepsilon,k}}={{\mathbf{R}}_{\mathrm{g},k}}-{{P}_{\text{p},k}}{{T}_{\text{p}}}{{\mathbf{R}}_{\mathrm{g},k}}{{\left( {{P}_{\text{p},k}}{{T}_{\text{p}}}\mathbf{R}_k+\sigma^2\mathbf{I} \right)}^{-1}}{{\mathbf{R}}_{\mathrm{g},k}}$.

\subsection{Communication}
Based on the estimated channels ${\mathbf{\hat{h}}}_{k}$ and ${\mathbf{\hat{g}}}_{k}$, the received signal at the BS after combining the signals during the data transmission phase can be written as
\begin{align}
	\label{eq:Yd}
	\mathbf{Y}_{\text{d}}=\sum\nolimits_{k\in \mathcal{K}}{\sqrt{{{P}_{\text{d},k}}}\left( {\mathbf{h}_{k}}+{\mathbf{g}_{k}} \right)\mathbf{x}_{\text{d},k}^{\mathsf T}}+{{\mathbf{N}}_{\text{d}}}, 
\end{align}
where ${{\mathbf{Y}}_{\text{d}}}\in {\mathbb{C}}^{N_\text{b} \times {T_\text{d}}}$, 
$\mathbf{x}_{\text{d},k}\in\mathbb{C}^{T_{\text{d}}\times1}$,  
and ${{\mathbf{N}}_{\text{d}}}\in {{\mathbb{C}}^{{{N}_{\text{b}}}\times {{T}_{\text{d}}}}}$ is the AWGN at the BS during the data transmission phase with i.i.d. elements obeying the distribution $\mathcal{C}\mathcal{N}(0,\sigma^2)$. Then, the received signal in~\eqref{eq:Yd} is processed
by the receive beamforming matrix $\mathbf{U}_{\text{c}} = \left[ \mathbf{u}_1,\cdots ,\mathbf{u}_K \right]\in {{\mathbb{C}}^{ {{N}_{\text{b}}} \times  K} }$,  where $\mathbf{u}_k \in {\mathbb{C}}^{N_\text{b}  \times  1}$ represents the receive beamforming vector for user-$k$.  
Specifically, the post-combining received signal at the BS, corresponding to user-$k$, can be given by~\cite{17-bjornson-massiveMIMO} 
\begin{align}
	\label{eq:yd,k}
	{{\mathbf{y}}_{\text{d},k}} =&\mathbf{u}_{k}^{\mathsf H}{{\mathbf{Y}}_{\text{d}}}=\underbrace{\sqrt{{{P}_{\text{d},k}}}\mathbb{E}\left\{ \mathbf{u}_{k}^{\mathsf H}\left( {{\mathbf{h}}_{k}}+{{\mathbf{g}}_{k}} \right) \right\}\mathbf{x}_{\text{d},k}^{\mathsf T}}_{\text{Desired signal over average channel}} 
	\notag \\ &
	+\underbrace{\sqrt{{{P}_{\text{d},k}}}\left[ \mathbf{u}_{k}^{\mathsf H}\left( {{\mathbf{h}}_{k}}+{{\mathbf{g}}_{k}} \right)-\mathbb{E}\left\{ \mathbf{u}_{k}^{\mathsf H}\left( {{\mathbf{h}}_{k}}+{{\mathbf{g}}_{k}} \right) \right\} \right]\mathbf{x}_{\text{d},k}^{\mathsf T}}_{\text{Desired signal over “unknown” channel}}
	\notag \\ &
	+\mathbf{u}_{k}^{\mathsf H}\sum\nolimits_{i\in \mathcal{K} \setminus \{k\}\;}{\sqrt{{{P}_{\text{d},i}}}\left( {{\mathbf{h}}_{i}}+{{\mathbf{g}}_{i}} \right)\mathbf{x}_{\text{d},i}^{\mathsf T}}+\mathbf{u}_{k}^{\mathsf H}{{\mathbf{N}}_{\text{d}}}.
\end{align}
For simplicity, we denote $\mathbf{z}_{k}={{\mathbf{g}}_{k}}+{{\mathbf{h}}_{k}},\ \forall k\in\mathcal{K}$.
Since deriving a closed-form expression for the ergodic rate is mathematically intractable, we adopt the well-known use-and-then-forget (UaTF) lower bound~\cite{17-bjornson-massiveMIMO} on the ergodic achievable rate, given by  
$\bar R_{k}^{\text{lb}}=\frac{{{T}_{\text{d}}}}{T}\log_2\left( 1+{{\bar\gamma }_{k}} \right)$
with 
\begin{align}
	\label{eq:bar gamma k}
	{{\bar{\gamma }}_{k}}\!=\!\frac{{{P}_{\text{d},k}}\left| \mathbb{E}\left\{ \mathbf{u}_{k}^{\mathsf H}{{\mathbf{z}}_{k}} \right\} \right|^{2}}{\sum\nolimits_{i\in \mathcal{K}}{{{P}_{\text{d},i}}\mathbb{E}\left\{ \left| \mathbf{u}_{k}^{\mathsf H}{{\mathbf{z}}_{i}} \right|^{2} \right\}}\!-\!{{P}_{\text{d},k}}\left| \mathbb{E}\left\{ \mathbf{u}_{k}^{\mathsf H}{{\mathbf{z}}_{k}} \right\} \right|^{2}\!+\! \sigma _k^{2}},
\end{align}
where $\sigma _k^{2} =  \left\| {{\mathbf{u}}_{k}} \right\|_{2}^{2}\sigma^2$.
Consequently, a lower bound on the ergodic sum rate of $K$ communication users can be given by 
\begin{align}
	& \bar R_{\text{sum}}^{\text{lb}}=\sum\nolimits_{k\in \mathcal{K}}{\bar R_{k}^{\text{lb}}}
	\notag \\ &
	=\sum\nolimits_{k\in \mathcal{K}} \frac{{{T}_{\text{d}}}}{T \ln (2)}\left[\ln \left(\sum\nolimits_{i\in \mathcal{K}}{{{P}_{\text{d},i}}\mathbb{E}\left\{ \left| \mathbf{u}_{k}^{\mathsf H}{{\mathbf{z}}_{i}} \right|^{2} \right\}}+ \sigma _k^{2} \right)\right. -
	\notag \\ & 
	\ \ 
	\left.  \ln\! \left(\sum\nolimits_{i\in \mathcal{K}}\!{{{P}_{\text{d},i}}\mathbb{E}\!\left\{ \left| \mathbf{u}_{k}^{\mathsf H}{{\mathbf{z}}_{i}} \right|^{2}\! \right\}}\!-\!{{P}_{\text{d},k}}\left| \mathbb{E}\left\{ \mathbf{u}_{k}^{\mathsf H}{{\mathbf{z}}_{k}} \!\right\} \right|^{2}\!+\! \sigma _k^{2} \!\right)\!\right].
\end{align}

\subsection{Sensing}
After obtaining the communication channel estimate $\mathbf{\hat{h}}_k$, interference cancellation (IC) is applied at the BS to remove the signal received via the direct user-BS links~\cite{21-JSTSP-ISAC}. Specifically, the BS reconstructs the estimated user signal using $\mathbf{\hat{h}}_k$ and then subtracts it from the received composite signal. Target detection is then performed on the resulting residual signal, which ideally contains the target echo, channel estimation error, and noise. Accordingly, the residual signal used for sensing can be expressed as:
\begin{align}
	y_{t}^{\text{sen}} = 
	\begin{cases}
		\mathbf{u}_\text{s}^{\mathsf H}\mathbf{  \left( \hat{G} + \Phi \right)}{{\mathbf{\Delta }}_{\text{p}}}{{\mathbf{x}}_{\text{p},t}}+\mathbf{u}_\text{s}^{\mathsf H}\mathbf{}{{\mathbf{n}}_{\text{p},t}}, t\in {{\mathcal{T}}_{\text{p}}}, 
		\\ 
		\mathbf{u}_\text{s}^{\mathsf H}\mathbf{ \left( \hat{G} + \Phi \right) }{{\mathbf{\Delta }}_{\text{d}}}{{\mathbf{x}}_{\text{d},t}}+\mathbf{u}_\text{s}^{\mathsf H}\mathbf{}{{\mathbf{n}}_{\text{d},t}}, t\in {{\mathcal{T}}_{\text{d}}},
	\end{cases}
\end{align}
where ${{\mathcal{T}}_{\text{p}}}=\left\{ 1,\cdots ,{{T}_{\text{p}}} \right\}$, 
${{\mathcal{T}}_{\text{d}}}=\left\{ T-{{T}_{\text{p}}}+1,\cdots ,T \right\}$, 
$\mathbf{x}_{\text{p},t}, \mathbf{x}_{\text{d},t}\in\mathbb{C}^{ K \times 1}$,
${{\mathbf{n}}_{\text{p},t}},{{\mathbf{n}}_{\text{d},t}}\in {{\mathbb{C}}^{{{N}_{\text{b}}}\times 1}}$, and $\mathbf{u}_\text{s}\in {\mathbb{C}}^{N_\text{b} \times 1}$ represents the receive beamforming vector for the target.
$\mathbf{\hat G }=\left[ {{\mathbf{\hat{g}}}_{1}},\cdots ,{{\mathbf{\hat{g}}}_{K}} \right]\in {{\mathbb{C}}^{{{N}_{\text{b}}}\times K}}$ is the stack of the estimation of $\mathbf g_k$ 
and 
$\mathbf{\Phi }=\left[ {\mathbf e_1},\cdots ,{\mathbf e_K} \right]\in {{\mathbb{C}}^{{{N}_{\text{b}}}\times K}}$ is the stack of the estimation error of $\mathbf z_k$ with $\mathbf e_k = \bm \upepsilon_k + \bm \upvarepsilon_k$.
${{\mathbf{\Delta }}_{\text{p}}}=\text{diag}(\sqrt{{{\mathbf{p}}_{\text{p}}}})$ and
${{\mathbf{\Delta }}_{\text{d}}}=\text{diag}(\sqrt{{{\mathbf{p}}_{\text{d}}}})$,
where
${{\mathbf{p}}_{\text{p}}}={{[{{P}_{\text{p},1}},\cdots ,{{P}_{\text{p},K}}]}^{\mathsf T}}\in {{\mathbb{R}}^{K\times 1}}$ and
${{\mathbf{p}}_{\text{d}}}={{[{{P}_{\text{d},1}},\cdots ,{{P}_{\text{d},K}}]}^{\mathsf T}}\in {{\mathbb{R}}^{K\times 1}}$. Additionally, the interference of ${\mathbf{\hat{h}}}_k$ has been removed through the IC technique.
For simplicity, we define $n_{\text{eff,p},t} \triangleq \mathbf{u}_\text{s}^{\mathsf H}\mathbf{\Phi}{{\mathbf{\Delta }}_{\text{p}}}{{\mathbf{x}}_{\text{p},t}}+\mathbf{u}_\text{s}^{\mathsf H}{{\mathbf{n}}_{\text{p},t}}$ and $n_{\text{eff,d},t} \triangleq  \mathbf{u}_\text{s}^{\mathsf H}\mathbf{\Phi}{{\mathbf{\Delta }}_{\text{d}}}{{\mathbf{x}}_{\text{d},t}}+\mathbf{u}_\text{s}^{\mathsf H}{{\mathbf{n}}_{\text{d},t}}$.

Then, the binary hypothesis testing formulation based on the presence (alternate hypothesis ${\mathcal{H}}_{1}$) or absence (null hypothesis ${\mathcal{H}}_{0}$) of the target can be expressed, respectively, as
\begin{align}
	\begin{aligned}
		& {\mathcal{H}}_{0} \text{: } y_{t}^{\text{sen}} = 
		\begin{cases}
			n_{\text{eff,p},t}, t\in {{\mathcal{T}}_{\text{p}}}, 
			\\ 
			n_{\text{eff,d},t}, t\in {{\mathcal{T}}_{\text{d}}},
		\end{cases}
		\\ 
		& {\mathcal{H}}_{1} \text{: } y_{t}^{\text{sen}} = 
		\begin{cases}
			\mathbf{u}_\text{s}^{\mathsf H}\mathbf{\hat{G}}{{\mathbf{\Delta }}_{\text{p}}}{{\mathbf{x}}_{\text{p},t}}+n_{\text{eff,p},t}, t\in {{\mathcal{T}}_{\text{p}}}, 
			\\ 
			\mathbf{u}_\text{s}^{\mathsf H}\mathbf{\hat{G}}{{\mathbf{\Delta }}_{\text{d}}}{{\mathbf{x}}_{\text{d},t}}+n_{\text{eff,d},t}, t\in {{\mathcal{T}}_{\text{d}}}.
		\end{cases}
	\end{aligned}
\end{align}

On the basis of the two hypotheses, the distributions followed by $y_{t}^{\text{sen}}$ can be written respectively as 
\begin{align}
	\begin{aligned}
		& {\mathcal{H}}_{0} \text{: } y_{t}^{\text{sen}}   \!  \sim  \! 
		\begin{cases}
			\mathcal{C}\mathcal{N} \!  \left( 0,\mathbf{u}_\text{s}^{\mathsf H}{{\mathbf{R}}_{\text{eff,p}}}{\mathbf{u}_\text{s}}\right), t\in {{\mathcal{T}}_{\text{p}}}, 
			\\ 
			\mathcal{C}\mathcal{N} \!  \left( 0,\mathbf{u}_\text{s}^{\mathsf H}{{\mathbf{R}}_{\text{eff,d}}}{\mathbf{u}_\text{s}} \right), t\in {{\mathcal{T}}_{\text{d}}},
		\end{cases}
		\\ 
		& {\mathcal{H}}_{1} \text{: } y_{t}^{\text{sen}}  \!  \sim  \! 
		\begin{cases}
			\mathcal{C}\mathcal{N} \! \left( \mathbf{u}_\text{s}^{\mathsf H}\mathbf{\hat{G}}{{\mathbf{\Delta }}_{\text{p}}}{{\mathbf{x}}_{\text{p},t}},\mathbf{u}_\text{s}^{\mathsf H}{{\mathbf{R}}_{\text{eff,p}}}{\mathbf{u}_\text{s}}\right)  , t\in {{\mathcal{T}}_{\text{p}}}, 
			\\ 
			\mathcal{C}\mathcal{N} \! \left( \mathbf{u}_\text{s}^{\mathsf H}\mathbf{\hat{G}}{{\mathbf{\Delta }}_{\text{d}}}{{\mathbf{x}}_{\text{d},t}},\mathbf{u}_\text{s}^{\mathsf H}{{\mathbf{R}}_{\text{eff,d}}}{\mathbf{u}_\text{s}} \right),  t\in {{\mathcal{T}}_{\text{d}}},
		\end{cases}
	\end{aligned}
\end{align}
where ${{\mathbf{R}}_{\text{eff,p}}}= \sum\nolimits_{k\in \mathcal{K}}{{{P}_{\text{p},k}}}{{\mathbf{R}}_{\text{err},k}}+\sigma^2\mathbf{I}$ and
${{\mathbf{R}}_{\text{eff,d}}}=\sum\nolimits_{k\in \mathcal{K}}{{{P}_{\text{d},k}}}{{\mathbf{R}}_{\text{err},k}}+\sigma^2\mathbf{I}$ with ${{\mathbf{R}}_{\text{err},k}}={{\mathbf{R}}_{\upepsilon,k}}+{{\mathbf{R}}_{\upvarepsilon,k}}$.
Thus, the probability density functions (PDFs) of the received signal under ${\mathcal{H}}_{0}$ and ${\mathcal{H}}_{1}$ can be represented respectively as~\cite{98-SP-Dectect}
\begin{align}
	& f\left( {{\mathbf{y}}^{\text{sen}}}\left| {{\mathcal{H}}_{0}} \right. \right)=\frac{{{e}^{-{{\left( {{\mathbf{y}}^{\text{sen}}} \right)}^{\mathsf H}}\mathbf{R}_{\text{eff}}^{-1}{{\mathbf{y}}^{\text{sen}}}}}}{{{\pi }^{T}}\det \left( {{\mathbf{R}}_{\text{eff}}} \right)}, 
	\notag \\ 
	& f\left( {{\mathbf{y}}^{\text{sen}}}\left| {{\mathcal{H}}_{1}} \right. \right)=\frac{{{e}^{-{{\left( {{\mathbf{y}}^{\text{sen}}}-\bm{\upmu } \right)}^{\mathsf H}}\mathbf{R}_{\text{eff}}^{-1}\left( {{\mathbf{y}}^{\text{sen}}}-\bm{\upmu } \right)}}}{{{\pi }^{T}}\det \left( {{\mathbf{R}}_{\text{eff}}} \right)},
	\label{eq:pdf}
\end{align}
where 
${{\mathbf{y}}^{\text{sen}}}=[y_{1}^{\text{sen}},\cdots ,y_{T}^{\text{sen}}]^{\mathsf T}$,
${{\mathbf{R}}_{\text{eff}}}=\text{blkdiag}(\mathbf{u}_{\text{s}}^{\mathsf H}{{\mathbf{R}}_{\text{eff},\text{p}}}{{\mathbf{u}}_{\text{s}}}{{\mathbf{I}}_{{{T}_{\text{p}}}}},\mathbf{u}_{\text{s}}^{\mathsf H}{{\mathbf{R}}_{\text{eff},\text{d}}}{{\mathbf{u}}_{\text{s}}}{{\mathbf{I}}_{{{T}_{\text{d}}}}})$, 
and $\bm{\upmu }=\left[ \bm{\upmu }_\text{p}, \bm{\upmu }_\text{d}\right]^{\mathsf T}$ with 
$\bm{\upmu }_\text{p}=\left[  \mathbf{u}_{\text{s}}^{\mathsf H}\mathbf{\hat{G}}{{\mathbf{\Delta }}_{\text{p}}}{{\mathbf{x}}_{\text{p},1}},\cdots ,\mathbf{u}_{\text{s}}^{\mathsf H}\mathbf{\hat{G}}{{\mathbf{\Delta }}_{\text{p}}}{{\mathbf{x}}_{\text{p},{{T}_{\text{p}}}}} \right]$ and $\bm{\upmu }_\text{d}=\left[\mathbf{u}_{\text{s}}^{\mathsf H}\mathbf{\hat{G}}{{\mathbf{\Delta }}_{\text{d}}}{{\mathbf{x}}_{\text{d},T-{{T}_{\text{p}}}+1}},\cdots ,\mathbf{u}_{\text{s}}^{\mathsf H}\mathbf{\hat{G}}{{\mathbf{\Delta }}_{\text{d}}}{{\mathbf{x}}_{\text{d},T}} \right]$.

Based on~\eqref{eq:pdf}, the log-likelihood ratio function can be expressed as
\begin{align}
	\ln\left(L\left( {{\mathbf{y}}^{\text{sen}}} \right)\right)&=\ln\left(\frac{f\left( {{\mathbf{y}}^{\text{sen}}}\left| {{\mathcal{H}}_{1}} \right. \right)}{f\left( {{\mathbf{y}}^{\text{sen}}}\left| {{\mathcal{H}}_{0}} \right. \right)}\right)
	\notag \\ &
	= 2\Re\left\{ {{\left( {{\mathbf{y}}^{\text{sen}}} \right)}^{\mathsf H}}\mathbf{R}_{\text{eff}}^{-1}\bm{\upmu } \right\}-{{\bm{\upmu }}^{\mathsf H}}\mathbf{R}_{\text{eff}}^{-1}\bm{\upmu }.
\end{align}
To simplify the calculation, $\psi\left({{\mathbf{y}}^{\text{sen}}}\right) = \Re\left\{ {{\left( {{\mathbf{y}}^{\text{sen}}} \right)}^{\mathsf H}}\mathbf{R}_{\text{eff}}^{-1}\bm{\upmu } \right\}$ is adopted as the decision variable. According to the Neyman-Pearson criterion, the likelihood ratio test (LRT) can be formulated as
$\psi\left({{\mathbf{y}}^{\text{sen}}}\right) \underset{{{\mathcal{H}}_{0}}}{\overset{{{\mathcal{H}}_{1}}}{\mathop{\gtrless }}} \kappa $,
where $\kappa$ denotes the detection  threshold.
Under hypothesis ${\mathcal{H}}_{0}$, $\psi\left({{\mathbf{y}}^{\text{sen}}}\right)\sim \mathcal{N}\left( 0,\frac{1}{2}{{\bm{\upmu }}^{\mathsf H}}\mathbf{R}_{\text{eff}}^{-1}\bm{\upmu } \right)$, and the probability of false alarm can be obtained as
\begin{align} 
	{{P}_{\text{FA}}}=\Pr\left( \psi \left( {{\mathbf{y}}^{\text{sen}}} \right)>\kappa \left| {{\mathcal{H}}_{0}} \right. \right)=Q\left( \frac{\sqrt{2}\kappa }{\sqrt{{{\bm{\upmu }}^{\mathsf H}}\mathbf{R}_{\text{eff}}^{-1}\bm{\upmu }}} \right),
\end{align}
where $ Q\left( x \right)=\frac{1}{\sqrt{2\pi }}\int_{x}^{\infty }{{{e}^{-{{t}^{2}}/2}}\, \mathrm{d}t}$ denotes the right tail probability function of the standard normal distribution. Particularly, we can obtain $\kappa =\frac{\sqrt{{{\bm{\upmu }}^{\mathsf H}}\mathbf{R}_{\text{eff}}^{-1}\bm{\upmu }}}{\sqrt{2}}{{Q}^{-1}}\left( {{P}_{\text{FA}}}  \right)$, where $Q^{-1}(x)$ is the inverse function of $Q(x)$.
Then, under the $\mathcal{H}_1$ hypothesis, $\psi\left({{\mathbf{y}}^{\text{sen}}}\right) \sim \mathcal{N}\left( {{\bm{\upmu }}^{\mathsf H}}\mathbf{R}_{\text{eff}}^{-1}\bm{\upmu },\frac{1}{2}{{\bm{\upmu }}^{\mathsf H}}\mathbf{R}_{\text{eff}}^{-1}\bm{\upmu } \right)$, and the target detection probability can be obtained as
\begin{align}
	{{P}_{\text{D}}}&=\Pr\left( \psi \left( {{\mathbf{y}}^{\text{sen}}} \right)>\kappa \left| {{\mathcal{H}}_{1}} \right. \right)=Q\left( \frac{\sqrt{2}\left( \kappa -{{\bm{\upmu}}^{\mathsf H}}\mathbf{R}_{\text{eff}}^{-1}\bm{\upmu} \right)}{\sqrt{{{\bm{\upmu}}^{\mathsf H}}\mathbf{R}_{\text{eff}}^{-1}\bm{\upmu}}} \right)
	\notag \\
	&=Q\left( {{Q}^{-1}}\left( {{P}_{\text{FA}}}  \right)-\sqrt{2\rho } \right),
	\label{eq:PD}
\end{align}
where $\rho ={{\bm{\upmu}}^{\mathsf H}}\mathbf{R}_{\text{eff}}^{-1}\bm{\upmu }$.
Therefore, we can conclude that ${\bm{\upmu}} \sim  \mathcal{C}\mathcal{N} \left(0,{{\mathbf{R}}_{\upmu}}\right)$, where ${{\mathbf{R}}_{\upmu}}=\text{blkdiag}(\mathbf{u}_{\text{s}}^{\mathsf H}{{\mathbf{R}}_{\upmu,\text{p}}}{{\mathbf{u}}_{\text{s}}}{{\mathbf{I}}_{{{T}_{\text{p}}}}},\mathbf{u}_{\text{s}}^{\mathsf H}{{\mathbf{R}}_{\upmu,\text{d}}}{{\mathbf{u}}_{\text{s}}}{{\mathbf{I}}_{{{T}_{\text{d}}}}})$ with 
${{\mathbf{R}}_{\upmu,\text{p}}}=\sum\nolimits_{k\in \mathcal{K}}{{{P}_{\text{p},k}}}{{\mathbf{R}}_{\hat{\mathrm{g}},k}}$ and
${{\mathbf{R}}_{\upmu,\text{d}}}=\sum\nolimits_{k\in \mathcal{K}}{{{P}_{\text{d},k}}}{{\mathbf{R}}_{\hat{\mathrm{g}},k}}$.
Then, the average target detection probability can be approximated as
\begin{align}
	\!\!\! {\bar{P}_{\text{D}}}={{\mathbb{E}}
	}\left\{ {{P}_{\text{D}}} \right\}\approx Q\left( {{Q}^{-1}}\left( {{P}_{\text{FA}}} \right)-\sqrt{2\text{tr}\left( {{\mathbf{R}}_{\upmu}}\mathbf{R}_{\text{eff}}^{-1} \right)} \right).
\end{align}

\subsection{Problem Formulation}
Note that since only statistical CSI is available before transmitting the pilot, we first formulate the problem of maximizing the UaTF lower bound on the ergodic achievable sum rate by deriving the optimal pilot and data power allocation, and receive beamforming at the BS, while guaranteeing the communication and sensing QoS. To obtain a stationary solution to this problem, we adopt an MRC-based receive beamforming approach, subsequently solving the optimal pilot and data power allocation via the AO algorithm. After obtaining the optimal power allocation scheme, we next obtain the channel estimates using MMSE estimation, and then formulate another problem of instantaneous achievable sum rate maximization via optimal beamforming design at the BS. 
With the given background, the problem of maximizing the lower bound on the ergodic sum achievable rate is formulated as
\begin{subequations}
	\label{P1}
	\begin{align}
		\left(\mathcal{P}_1 \right)  \max_{\mathbf{U}, \mathbf{P}}\quad &{\bar R}_{\text{sum}}^{\text{lb}}\left( \mathbf{U},\mathbf{P} \right) \label{P1a}\\ 
		\text{s.t.} \quad & 
		{{T}_{\text{p}}}{{P}_{\text{p},k}}+{{T}_{\text{d}}}{{P}_{\text{d},k}}\le PT,\forall k\in \mathcal{K}, \label{P1b} \\ 
		& {{P}_{\text{p},k}},{{P}_{\text{d},k}}\ge 0,\forall k\in \mathcal{K}, \label{P1c}\\ 
		& {\bar R}_{k}^{\text{lb}}\left( \mathbf{U},\mathbf{P} \right)\ge {{R}_{\text{th},k}},\forall k\in \mathcal{K}, \label{P1d}\\ 
		& {\bar{P}_{\text{D}}}\left( \mathbf{U},\mathbf{P} \right)\ge {{P}_{\text{D, th}}}, \label{P1e}
		\\ 
		& \left\|\mathbf{u}_{\text{s}} \right\|_2^2=1,\left\|\mathbf{u}_{k} \right\|_2^2=1,\forall k\in \mathcal{K}, \label{P1f}
	\end{align}
\end{subequations}
\leavevmode\unskip
where $\mathbf{U} = \left[ \mathbf{U}_\text{c}, \mathbf{u}_\text{s} \right] \in {{\mathbb{C}}^{ {{N}_{\text{b}}} \times \left( K+1 \right) }}$, $\mathbf{P}=[{{\mathbf{p}}_{\text{p}}},{{\mathbf{p}}_{\text{d}}}]\in {{\mathbb{R}}^{K\times 2}}$.
In ($\mathcal P_1$), the objective in~\eqref{P1a} represents the lower bound on the average achievable sum rate,~\eqref{P1b} represents the constraint on total energy transmitted during the channel coherence time $T$,~\eqref{P1c} enforces the non-negativity of the transmit power of user-$k$ during pilot and data transmission phases,~\eqref{P1d} ensures that the lower-bound on the average achievable rate for user-$k$ is greater than or equal to the threshold $R_{\mathrm{th},k}$, the quality-of-service on sensing is constrained by~\eqref{P1e} with $P_{\mathrm{D,th}}$ being the threshold on target detection probability, and~\eqref{P1f} ensures that the communication and sensing receive beamforming vectors are unit-norm.

\section{Proposed Solution}
It is not difficult to see that $\left(\mathcal{P}_1 \right) $ is non-convex due to the variable coupling in~\eqref{P1a},~\eqref{P1d}, and~\eqref{P1e}, and the constraints~\eqref{P1d},~\eqref{P1e}, and~\eqref{P1f} are particularly difficult to handle. 
It is noteworthy that due to the presence of $Q$-function (and its inverse), the sensing constraint in~\eqref{P1e} is intractable. 
To tackle this issue, we first note that $P_{\mathrm D}$ is a non-decreasing function of $\rho$. Thus, the constraint on $P_\text{D}$ can be mapped to a constraint on $\rho$. Furthermore, $\rho$ represents the energy of the signal ${{\mathbf{y}}^{\text{sen}}}$ after noise whitening, which can be regarded as the effective signal-to-noise ratio (SNR). 
Therefore, to simplify the calculation, we introduce the ergodic sensing signal-to-interference-plus-noise ratio (SINR), which can be written as
\begin{align}
	\label{eq:gammas}
	{\bar{\gamma }_{\text{s}}}\!=\!\frac{\sum\nolimits_{k\in \mathcal{K}}\left( {{T}_{\text{p}}}{{P}_{\text{p},k}}\!+\!{{T}_{\text{d}}}{{P}_{\text{d},k}} \right){{\mathbf{u}}_{\text{s}}^{\mathsf H}}{{\mathbf{R}}_{\hat{\mathrm{g}},k}}\mathbf{u}_{\text{s}}}{\sum\nolimits_{k\in \mathcal{K}}{\left( {{T}_{\text{p}}}{{P}_{\text{p},k}}\!+\!{{T}_{\text{d}}}{{P}_{\text{d},k}} \right){{\mathbf{u}}_{\text{s}}^{\mathsf H}}{{\mathbf{R}}_{\text{err},k}}\mathbf{u}_{\text{s}}}\!+\!T\left\| {{\mathbf{u}}_{\text{s}}} \right\|_{2}^{2}\sigma^2}.
\end{align}
By exploiting the linear dependency of $\rho$ and the sensing SINR, and based on~\eqref{eq:PD}, the value of $\gamma_{\mathrm{s,th}}$ corresponding to ${{P}_{\text{D, th}}}$ can be found, where $\gamma_{\mathrm{s,th}}$ denotes the threshold for the sensing SINR. 
Then, the constraint on $P_{\mathrm D}$ in~\eqref{P1e} is replaced by the constraint on ${\bar{\gamma }_{\text{s}}}$.
As a consequence, $\mathcal{P}_1$ can be rewritten as
\begin{align}
	\label{P2}
	\boxed{\left(\mathcal{P}_2 \right) \max_{\mathbf{U}, \mathbf{P}}\,\ \big\{{\bar R}_{\text{sum}}^{\text{lb}}\big|\eqref{P1b}-\eqref{P1d},~\eqref{P1f},\bar{\gamma} _\text{s}\ge { \gamma_{\mathrm{s,th}}} \big\}.}
\end{align}

Then, to tackle this challenging non-convex optimization problem $\left(\mathcal{P}_2 \right) $, we adopt the well-known AO approach, where we first optimize the variable $\mathbf{U}$ while keeping $\mathbf P$ fixed, and then optimize $\mathbf{P}$ for a fixed $\mathbf{U}$. 

\subsection{Receive Beamforming Design Before Transmitting Pilots}

Recall that we adopt a two-stage design framework. In the first stage, based on statistical channel information, we employ an MRC-based receive combining scheme for the communication users and an EVD-based scheme for the sensing target to derive a tractable expression for the average achievable performance. Under these fixed combining structures, the optimal pilot and data power allocation scheme is determined by solving problem $(\mathcal{P}_2)$. 

In the second stage, once the optimal power allocation is obtained, the users transmit pilot signals to enable channel estimation at the BS. Based on the estimated instantaneous CSI, the problem of instantaneous achievable sum-rate maximization is then solved to obtain the optimal receive combining vectors.

With the power allocation matrix $\mathbf{P}$ fixed, the MRC-based receive combiner for the $k$-th communication user is given by
$\mathbf{u}_{k}=(\hat{\mathbf{h}}_k+{{{\mathbf{\hat{g}}}}_{k}}) / \sqrt{\mathbb{E} \big\{\|\hat{\mathbf{h}}_k+{{{\mathbf{\hat{g}}}}_{k}}\|_2^2\big\}}= \mathbf{\hat z}_{k} /  \sqrt{\mathbb{E} \left\{\|\mathbf{\hat z}_{k}\|_2^2\right\}},\ \forall k\in\mathcal{K}$.
Using MMSE channel estimation, a closed-form expression for $\bar{\gamma}_{k}$ is derived in Theorem~\ref{thm:bar gamma k}.
\begin{Theorem}
	\label{thm:bar gamma k}
	A closed-form expression for ${{\bar{\gamma }}_{k}}$ is given by
	\begin{align}
		\label{eq:bar gamma k 2}
		{{\bar{\gamma }}_{k}}=\frac{{{P}_{\text{d},k}}\text{tr}\left( {{\mathbf{R}}_{\text{est},k}} \right)}{\sum\nolimits_{{i\in \mathcal{K}}\;}{{{P}_{\text{d},i}}\frac{\text{tr}\left( {{\mathbf{R}}_{i}}{{\mathbf{R}}_{\text{est},k}} \right)}{\text{tr}\left( {{\mathbf{R}}_{\text{est},k}} \right)}}+\sigma^2},
	\end{align}
	where
	${{\mathbf{R}}_{k}}={{\mathbf{R}}_{\text{h},k}}+{{\mathbf{R}}_{\mathrm{g},k}}$
	and
	${{\mathbf{R}}_{\text{est},k}}={{\mathbf{R}}_{\hat{\mathrm{h}},k}}+{{\mathbf{R}}_{\hat{\mathrm{g}},k}}={{P}_{\text{p},k}}{{T}_{\text{p}}}{{\mathbf{R}}_{k}}{\left( {{P}_{\text{p},k}}{{T}_{\text{p}}}{{\mathbf{R}}_{k}}+\sigma^2\mathbf{I} \right)^{-1}}{{\mathbf{R}}_{k}}$.
\end{Theorem}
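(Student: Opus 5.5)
The plan is to substitute the MRC combiner $\mathbf{u}_k=\hat{\mathbf z}_k/\sqrt{\mathbb E\{\|\hat{\mathbf z}_k\|_2^2\}}$ into the UaTF SINR expression~\eqref{eq:bar gamma k} and evaluate each expectation in closed form with Gaussian moment identities. I treat the noise term in the UaTF sense, i.e., $\sigma_k^2=\mathbb E\{\|\mathbf u_k\|_2^2\}\sigma^2$.

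\textbf{Step 1: statistics of $\hat{\mathbf z}_k$.} Since $\hat{\mathbf h}_k$ and $\hat{\mathbf g}_k$ are both linear in the same observation $\mathbf y_{\text{p},k}$, their sum is
\begin{align}
\hat{\mathbf z}_k=\sqrt{P_{\text{p},k}}\,\mathbf R_k\left(P_{\text{p},k}T_{\text{p}}\mathbf R_k+\sigma^2\mathbf I\right)^{-1}\mathbf y_{\text{p},k}.
\end{align}
This is exactly the MMSE estimate of $\mathbf z_k=\mathbf h_k+\mathbf g_k\sim\mathcal{CN}(\mathbf 0,\mathbf R_k)$ from~\eqref{eq:y_pk}. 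I then compute its covariance directly from $\mathbb E\{\mathbf y_{\text{p},k}\mathbf y_{\text{p},k}^{\mathsf H}\}=P_{\text{p},k}T_{\text{p}}^2\mathbf R_k+T_{\text{p}}\sigma^2\mathbf I$, assuming unit-energy pilot symbols so that $\|\mathbf x_{\text{p},k}\|_2^2=T_{\text{p}}$. This gives $\hat{\mathbf z}_k\sim\mathcal{CN}(\mathbf 0,\mathbf R_{\text{est},k})$ with $\mathbf R_{\text{est},k}$ as stated; it is the covariance of the sum, so it absorbs the cross terms between $\hat{\mathbf h}_k$ and $\hat{\mathbf g}_k$.

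By joint Gaussianity and MMSE orthogonality, the error $\mathbf e_k=\mathbf z_k-\hat{\mathbf z}_k$ is independent of $\hat{\mathbf z}_k$ and has covariance $\mathbf R_k-\mathbf R_{\text{est},k}$. Moreover, for $i\neq k$, $\mathbf z_i$ is independent of $\hat{\mathbf z}_k$. This holds because the channels of different users are independent, and the filtered noises $\mathbf N_{\text{p}}\mathbf x_{\text{p},k}^*$ are independent across $k$ since the pilots are orthogonal and the noise is white. It follows that $\mathbb E\{\|\hat{\mathbf z}_k\|_2^2\}=\text{tr}(\mathbf R_{\text{est},k})$ and $\mathbb E\{\|\mathbf u_k\|_2^2\}=1$, which produces the $\sigma^2$ term in the denominator.

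\textbf{Step 2: the individual moments.} For the desired-signal term,
\begin{align}
\mathbb E\{\mathbf u_k^{\mathsf H}\mathbf z_k\}=\frac{\mathbb E\{\|\hat{\mathbf z}_k\|_2^2\}+\mathbb E\{\hat{\mathbf z}_k^{\mathsf H}\mathbf e_k\}}{\sqrt{\text{tr}(\mathbf R_{\text{est},k})}}=\sqrt{\text{tr}(\mathbf R_{\text{est},k})},
\end{align}
so the numerator of~\eqref{eq:bar gamma k} becomes $P_{\text{d},k}\,\text{tr}(\mathbf R_{\text{est},k})$.

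For $i\neq k$, I condition on $\hat{\mathbf z}_k$ and use the independence of $\mathbf z_i$. This gives $\mathbb E\{|\mathbf u_k^{\mathsf H}\mathbf z_i|^2\}=\text{tr}(\mathbf R_i\mathbf R_{\text{est},k})/\text{tr}(\mathbf R_{\text{est},k})$.

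\textbf{Step 3: the self-interference term $i=k$ (main obstacle).} Here I must handle $\mathbb E\{|\hat{\mathbf z}_k^{\mathsf H}\mathbf z_k|^2\}$, where the estimate and the channel are correlated. I will split $\mathbf z_k=\hat{\mathbf z}_k+\mathbf e_k$. The cross terms vanish by independence and the zero mean of $\mathbf e_k$. For the remaining quadratic form I use the fourth-moment identity $\mathbb E\{|\mathbf x^{\mathsf H}\mathbf x|^2\}=(\text{tr}\mathbf R)^2+\text{tr}(\mathbf R^2)$ for $\mathbf x\sim\mathcal{CN}(\mathbf 0,\mathbf R)$. This yields
\begin{align}
\mathbb E\{|\hat{\mathbf z}_k^{\mathsf H}\mathbf z_k|^2\}=\left(\text{tr}\,\mathbf R_{\text{est},k}\right)^2+\text{tr}\!\left(\mathbf R_{\text{est},k}^2\right)+\text{tr}\!\left(\mathbf R_{\text{est},k}(\mathbf R_k-\mathbf R_{\text{est},k})\right).
\end{align}
Subtracting $|\mathbb E\{\hat{\mathbf z}_k^{\mathsf H}\mathbf z_k\}|^2=(\text{tr}\,\mathbf R_{\text{est},k})^2$, the remaining terms collapse to $\text{tr}(\mathbf R_k\mathbf R_{\text{est},k})$. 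After normalization by $\text{tr}(\mathbf R_{\text{est},k})$, this has the same form as the $i\neq k$ terms.

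\textbf{Step 4: assembly.} Collecting all users' contributions under the single sum $\sum_{i\in\mathcal K}P_{\text{d},i}\,\text{tr}(\mathbf R_i\mathbf R_{\text{est},k})/\text{tr}(\mathbf R_{\text{est},k})$ and adding $\sigma^2$ gives~\eqref{eq:bar gamma k 2}. The points needing the most care are the fourth-moment cancellation in Step 3 and the justification, via pilot orthogonality, that $\hat{\mathbf z}_k$ is independent of the other users' channels.
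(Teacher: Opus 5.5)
Your proposal is correct and follows essentially the same route as the paper's proof. You substitute the MRC combiner into the UaTF SINR with $\mathbb E\{\|\mathbf u_k\|_2^2\}=1$, use independence for the $i\neq k$ terms, and handle $i=k$ by splitting $\mathbf z_k=\hat{\mathbf z}_k+\mathbf e_k$ and applying the Gaussian fourth-moment identity. The differences are cosmetic: the paper evaluates $\mathbb E\{|\hat{\mathbf z}_k^{\mathsf H}\hat{\mathbf z}_k|^2\}$ by applying the quadratic-form identity to $\mathbf y_{\text{p},k}$ rather than directly to $\hat{\mathbf z}_k\sim\mathcal{CN}(\mathbf 0,\mathbf R_{\text{est},k})$. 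Your cancellation leaving $\text{tr}(\mathbf R_k\mathbf R_{\text{est},k})$ in the $i=k$ term is exactly what the stated formula requires; the paper's intermediate display writes $\text{tr}(\mathbf R_{\text{est},k}\mathbf R_{\text{est},k})$ there, which is evidently a typo.
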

\begin{IEEEproof}
	Please see Appendix~\ref{sec:Proof-bar-gamma-k}.
\end{IEEEproof}

At the same time, for sensing, (\ref{eq:gammas}) can be rewritten as
\begin{align}
	\label{eq:gammas 2}
	{{\bar\gamma }_{\text{s}}}
    =\frac{{{\mathbf{u}}_{\text{s}}^{\mathsf H}}{{\mathbf{\Omega }}_{1}}\mathbf{u}_{\text{s}}}{{{\mathbf{u}}_{\text{s}}^{\mathsf H}}\left( {{\mathbf{\Omega }}_{2}}+T\sigma^2  \mathbf{I}\right)\mathbf{u}_{\text{s}}},
\end{align}
where ${{\mathbf{\Omega }}_{1}}=\sum\nolimits_{k\in \mathcal{K}}{\left( {{T}_{\text{p}}}{{P}_{\text{p},k}}+{{T}_{\text{d}}}{{P}_{\text{d},k}} \right){{\mathbf{R}}_{\hat{\mathrm{g}},k}}}$ and
${{\mathbf{\Omega }}_{2}}=\sum\nolimits_{k\in \mathcal{K}}{\left( {{T}_{\text{p}}}{{P}_{\text{p},k}}+{{T}_{\text{d}}}{{P}_{\text{d},k}} \right){{\mathbf{R}}_{\text{err},k}}}$.
Then, according to the concept of the generalized Rayleigh quotient, the optimal receive beamforming vector for the target can be given by~\cite{25-ICC-ISCPT}
\begin{align}
	\label{eq:us-opt}
	\mathbf{u}_{\text{s}}^{\text{opt}}=\boldsymbol{\lambda}_{\max }\big\{ {{\left( {{\mathbf{\Omega }}_{2}}+T\sigma^2 \mathbf{I}\right)}^{-1}}{{\mathbf{\Omega }}_{1}} \big\},
\end{align}
where $\boldsymbol{\lambda}_{\max }\{ \mathbf X \}$ represents the eigenvector associated with the largest eigenvalue of the square matrix $\mathbf X$.
It is easy to note that following the MRC-based scheme, $\mathbf u_k \ \forall k \in \mathcal K$ is unit-norm. Also, following the \emph{orthonormality} property of the eigenvectors, $\mathbf u_{\mathrm s}^{\mathrm{opt}}$ is also unit-norm. This means using MRC and EVD schemes,~\eqref{P1f} is satisfied by default.

{   \newcounter{TempEqCnt3} 
	\setcounter{TempEqCnt3}{\value{equation}} 
	\setcounter{equation}{28} 
	\setlength{\abovedisplayskip}{0.1cm}
	\begin{figure*}[tb] 	
		\vspace{-0.6cm}
		\hrulefill	
		\begin{align}
			\label{eq:grad_Pd_closed}
			& \nabla_{P_{\mathrm{p},\ell}}\mathcal{L}_{\bm{\upomega},\xi}(\mathbf{p}_{\text{p}},\bm{\uptau})=\frac{{{T}_{\text{d}}}}{T\ln(2)}\sum\nolimits_{k\in \mathcal{K}}
			\left( 1+ \frac{\omega_{k}}{R_{\mathrm{th},k}}+\frac{{f_k}\left(\mathbf{p}_{\text{p}},{{\tau }_{k}}\right)}{R_{\mathrm{th},k} \xi} \right)
			\left[-\frac{1}{t_{\text{D},k}}\left({{\sum\nolimits_{i\in\mathcal{K}}{{P}_{\text{d},i}}\text{tr}\left( {\mathbf{R}}_i \acute{\mathbf{R}_{\mathrm{est},k}} \right)} + \sigma^2\text{tr}\left( \acute{\mathbf{R}_{\mathrm{est},k}}\right)}\right)\right.
			\notag \\ &
			\! +\! \left. \frac{1}{t_{\text{N},k}}\!\left(\!{2P_{\mathrm{d},k}\text{tr}\left( {{\mathbf{R}}_{\text{est},k}} \right)\text{tr}\left( \acute{\mathbf{R}_{\mathrm{est},k}}\right) \!+ \!\sum\nolimits_{i\in\mathcal{K}}{{P}_{\text{d},i}}\text{tr}\Big( {\mathbf{R}}_i \acute{\mathbf{R}_{\mathrm{est},k}} \Big) \!+ \!\sigma^2\text{tr}\Big( \acute{\mathbf{R}_{\mathrm{est},k}}\Big)}\!\right)\!\right] \! - \! \left(\omega_{\mathrm{s}}+\frac{{{f}_{\text{s}}}\left(\mathbf{p}_{\text{p}},{{\tau }_\text{s}}\right)}{\xi}\right)\frac{t_{\text{N,s}}\acute{t_{\text{D,s}}}\!-\!t_{\text{D,s}}\acute{t_{\text{N,s}}}}{\gamma _{\text{s,th}} t_{\text{D,s}}^2},
		\end{align}
		where $\acute{\mathbf{R}_{\mathrm{est},k}}$, $\acute{t_{\text{N,s}}}$, and $\acute{t_{\text{D,s}}}$ are defined in Appendix~\ref{sec:Proof-grad_Pd}.
		
		\hrulefill   
	\end{figure*}
	
	\setcounter{equation}{\value{TempEqCnt3}}}
    
\subsection{Optimizing Power Allocation}
After obtaining the receive combining filter before pilot transmission, we move on to the problem of optimal power allocation. For a fixed $\mathbf U$, $(\mathcal{P}_3)$ boils down to the following optimization problem:
\begin{subequations}
	\label{P3}
	\begin{align}
		\left(\mathcal{P}_{3}\right)  \max_{\mathbf{P}\ge 0}\quad &{\bar R}_{\text{sum}}^{\text{lb}}\left( \mathbf{P} \right) \label{P3a}\\ 
		\text{s.t.}\quad &{{T}_{\text{p}}}{{P}_{\text{p},k}}+{{T}_{\text{d}}}{{P}_{\text{d},k}}\le PT,\forall k\in \mathcal{K}, \label{P3b}\\
		&  {\bar R}_{k}^{\text{lb}}\left( \mathbf{P} \right)\ge {{R}_{\text{th},k}},\forall k\in \mathcal{K}, \label{P3c}\\ 
		& \bar{\gamma} _\text{s}\left( \mathbf{P} \right)\ge \gamma _{\text{s,th}}. \label{P3d} 
	\end{align}
\end{subequations}
Obviously, the objective function of $\left(\mathcal{P}_3 \right) $ is non-convex and the matrix inverse is coupled with the variable $\mathbf{p}_{\mathrm{p}}$, which further introduces non-linear complexity. Hence, we employ  an AO-based approach of fixing variable $\mathbf{p}_{\mathrm{d}}$ to optimize variable $\mathbf{p}_{\mathrm{p}}$ and then optimizing variable $\mathbf{p}_{\mathrm{d}}$ based on the optimized variable $\mathbf{p}_{\mathrm{p}}$.
\subsubsection{Optimal pilot power allocation  $\mathbf{p}_{\mathrm{p}}$}
The problem of obtaining optimal $\mathbf{p}_{\mathrm{p}}$ for a fixed $\mathbf{p}_{\mathrm{d}}$ is more challenging to solve since the problem is highly non-convex, and convexifying the problem is a tedious task. 
Therefore, we rely on the PDD-based gradient ascent approach to obtain a stationary solution to the problem. For this purpose, we introduce the slack variables $\bm{\uptau }={{[{{\tau }_{1}},\cdots,{{\tau }_{K}},{{\tau }_{s}}]}^{\mathsf T}}\in \mathbb{R}_{\geq 0}^{(K+1)\times 1}$ to convert the inequality constraints~\eqref{P3c} and~\eqref{P3d} into the equality constraints, which can be expressed respectively as
\begin{align}
	&{{f}_{k}}\left(\mathbf{p}_{\text{p}},{{\tau }_{k}}\right)= 1-\frac{1}{{R}_{\text{th},k}}{\bar R}_{k}^{\text{lb}}\left( \mathbf{p}_{\text{p}} \right)+{{\tau }_{k}}=0, \\
	&{{f}_{\text{s}}}\left(\mathbf{p}_{\text{p}},{{\tau }_{\text{s}}} \right)= 1 -\frac{1}{\gamma _{\text{s,th}}}{\bar{\gamma }_{\text{s}}}\left( \mathbf{p}_{\text{p}} \right)+{{\tau }_{\text{s}}}=0. \label{eq:fs} 
\end{align}
Hence, the augmented Lagrangian function for $\left(\mathcal{P}_3 \right) $ with a fixed $\mathbf{p}_{\mathrm{d}}$ can be formulated as 
\begin{align}
	\label{eq:L}
	{{\mathcal{L}}_{\bm{\upomega },\xi }}\!\left(\mathbf{p}_{\text{p}},\bm{\uptau } \right)=& {{\bar R}_{\text{sum}}^{\text{lb}}}\left(\mathbf{p}_{\text{p}}\right)\!-\! \sum\nolimits_{k\in \mathcal{K}}\!{{{\omega }_{k}}{f_k}\left(\mathbf{p}_{\text{p}},{{\tau }_{k}}\right)}\!-\!{{\omega }_{\text{s}}}{{f}_{\text{s}}}\left(\mathbf{p}_{\text{p}},{{\tau }_{\text{s}}} \right)
	\notag \\
	&\!-\!\frac{1 }{2 \xi} \left(\sum\nolimits_{k\in \mathcal{K}}{f_k^2}\left(\mathbf{p}_{\text{p}},{{\tau }_{k}}\right)\!+\!{{f}_{\text{s}}^2}\left(\mathbf{p}_{\text{p}},{{\tau }_\text{s}}\right)\right),
\end{align}
where 
$\bm{\upomega } = {{[{{\omega }_{1}},\cdots ,{{\omega }_{K}},{{\omega }_{\text{s}}}]}^{\mathsf T}}\in {{\mathbb{R}}^{(K+1)\times 1}}$ represents the vector containing Lagrangian multipliers and $\xi > 0$ represents the penalty parameter.
Finally, $\left(\mathcal{P}_3 \right) $ with a fixed $\mathbf{p}_{\mathrm{d}}$ can be transformed into the following equivalent optimization problem
\begin{align}
	\label{P4}
	\boxed{\left(\mathcal{P}_4 \right) \max_{\mathbf{p}_{\text{p}},\bm{\uptau }} \big\{ {{\mathcal{L}}_{\bm{\upomega },\xi }}\left(\mathbf{p}_{\text{p}},\bm{\uptau } \right)\big|\bm{\uptau } \geq 0,~\eqref{P3b} \big\}.}
\end{align}
Then, $\left(\mathcal{P}_4 \right) $ can be solved by iteratively updating $\mathbf{p}_{\text{p}}$, $\bm{\uptau }$, $\bm{\upomega }$, and $\xi$ until the convergence criterion is satisfied. The update of $\mathbf{p}_{\text{p}}$ at the $(\imath)$-th iteration can be expressed as
\begin{align}
	\label{eq:Update_Pd}
	\mathbf{p}_{\text{p}}^{(\imath)}=\mathbf{p}_{\text{p}}^{(\imath-1)}+\delta\nabla_{\mathbf{p}_{\text{p}}}\mathcal{L}_{\bm{\upomega},\xi}\big(\mathbf{p}_{\text{p}}^{(\imath-1)},\bm{\uptau}^{(\imath-1)}\big),
\end{align}
where $\delta$ denotes the step size. 
The optimal step-size $\delta$ is obtained using the backtracking line search following Armijo's condition~\cite{66-Mini}.
In addition, the closed-form expression for $\nabla_{\mathbf{p}_{\text{p}}}\mathcal{L}_{\bm{\upomega},\xi}(\mathbf{p}_{\text{p}},\bm{\uptau})$ is derived in Theorem~\ref{thm:grad_Pd}. 
\begin{Theorem}
	\label{thm:grad_Pd}
	The gradient $\nabla_{\mathbf{p}_{\text{p}}}\mathcal{L}_{\bm{\upomega},\xi}(\mathbf{p}_{\text{p}},\bm{\uptau})$ w.r.t. $\mathbf{p}_{\mathrm{p}}$ can be given by $\nabla_{\mathbf{p}_{\text{p}}}\mathcal{L}_{\bm{\upomega},\xi}(\mathbf{p}_{\text{p}},\bm{\uptau})=\big[\nabla_{\mathbf{p}_{\text{p},1}}\mathcal{L}_{\bm{\upomega},\xi}(\mathbf{p}_{\text{p}},\bm{\uptau}),\ldots,\nabla_{\mathbf{p}_{\text{p},K}}\mathcal{L}_{\bm{\upomega},\xi}(\mathbf{p}_{\text{p}},\bm{\uptau})\big]^{\mathsf T}$. 
	Then, a closed-form expression for $\nabla_{P_{\mathrm{p},\ell}}\mathcal{L}_{\bm{\upomega},\xi}(\mathbf{p}_{\text{p}},\bm{\uptau})$ is given by~\eqref{eq:grad_Pd_closed}, which is shown at the top of this page.
\end{Theorem}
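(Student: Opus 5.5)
The plan is to differentiate the augmented Lagrangian~\eqref{eq:L} term by term with respect to a single pilot power $P_{\mathrm{p},\ell}$. Throughout, $\mathbf{p}_{\mathrm d}$, $\bm{\uptau}$, $\bm{\upomega}$, $\xi$ and the sensing combiner $\mathbf u_{\mathrm s}$ are held fixed, as in the AO step; this is needed because $\mathbf u_{\mathrm s}$ from~\eqref{eq:us-opt} is itself computed from $\mathbf P$. Stacking the $K$ partial derivatives then gives the vector gradient.

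\textbf{Step 1: rewrite the per-user rate.} Starting from Theorem~\ref{thm:bar gamma k}, multiply numerator and denominator of~\eqref{eq:bar gamma k 2} by $\text{tr}(\mathbf R_{\mathrm{est},k})$ and write
\begin{align*}
\bar R_k^{\mathrm{lb}}=\tfrac{T_{\mathrm d}}{T\ln 2}\left(\ln t_{\mathrm N,k}-\ln t_{\mathrm D,k}\right),
\end{align*}
where
\begin{align*}
t_{\mathrm D,k}&=\sum\nolimits_{i\in\mathcal K}P_{\mathrm d,i}\,\text{tr}(\mathbf R_i\mathbf R_{\mathrm{est},k})+\sigma^2\text{tr}(\mathbf R_{\mathrm{est},k}),\\
t_{\mathrm N,k}&=t_{\mathrm D,k}+P_{\mathrm d,k}\,\text{tr}^2(\mathbf R_{\mathrm{est},k}).
\end{align*}
Both quantities are linear or quadratic in traces of $\mathbf R_{\mathrm{est},k}$. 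Their derivatives therefore follow from linearity of the trace and the product rule; the squared trace produces the factor $2P_{\mathrm d,k}\text{tr}(\mathbf R_{\mathrm{est},k})\text{tr}(\acute{\mathbf R_{\mathrm{est},k}})$ that appears in the target expression. Only $\mathbf R_{\mathrm{est},\ell}$ depends on $P_{\mathrm p,\ell}$. Hence $\acute{\mathbf R_{\mathrm{est},k}}$ vanishes for $k\neq\ell$, which explains why the sum over $k$ in~\eqref{eq:grad_Pd_closed} collapses in effect; I will define $\acute{\mathbf R_{\mathrm{est},k}}$ with an implicit Kronecker delta $\delta_{k\ell}$.

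\textbf{Step 2 (main computation): the matrix derivatives.} Set $\mathbf A_\ell=P_{\mathrm p,\ell}T_{\mathrm p}\mathbf R_\ell+\sigma^2\mathbf I$ and use $\partial\mathbf A^{-1}=-\mathbf A^{-1}(\partial\mathbf A)\mathbf A^{-1}$. This gives
\begin{align*}
\frac{\partial \mathbf R_{\mathrm{est},\ell}}{\partial P_{\mathrm p,\ell}}
=T_{\mathrm p}\mathbf R_\ell\mathbf A_\ell^{-1}\left(\mathbf A_\ell-P_{\mathrm p,\ell}T_{\mathrm p}\mathbf R_\ell\right)\mathbf A_\ell^{-1}\mathbf R_\ell
=\sigma^2T_{\mathrm p}\mathbf R_\ell\mathbf A_\ell^{-2}\mathbf R_\ell .
\end{align*}
The same device applies to the sensing covariances:
\begin{align*}
\frac{\partial \mathbf R_{\hat{\mathrm g},\ell}}{\partial P_{\mathrm p,\ell}}=T_{\mathrm p}\mathbf R_{\mathrm g,\ell}\mathbf A_\ell^{-1}\mathbf R_{\mathrm g,\ell}-P_{\mathrm p,\ell}T_{\mathrm p}^2\mathbf R_{\mathrm g,\ell}\mathbf A_\ell^{-1}\mathbf R_\ell\mathbf A_\ell^{-1}\mathbf R_{\mathrm g,\ell},
\end{align*}
and the analogous expression holds for $\mathbf R_{\hat{\mathrm h},\ell}$. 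Since $\mathbf R_{\mathrm{err},\ell}=\mathbf R_\ell-\mathbf R_{\hat{\mathrm h},\ell}-\mathbf R_{\hat{\mathrm g},\ell}$, its derivative is minus the sum of these two.

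\textbf{Step 3: the sensing SINR and the Lagrangian.} Write $\bar\gamma_{\mathrm s}=t_{\mathrm N,s}/t_{\mathrm D,s}$ with $t_{\mathrm N,s}=\mathbf u_{\mathrm s}^{\mathsf H}\mathbf\Omega_1\mathbf u_{\mathrm s}$ and $t_{\mathrm D,s}=\mathbf u_{\mathrm s}^{\mathsf H}(\mathbf\Omega_2+T\sigma^2\mathbf I)\mathbf u_{\mathrm s}$, as in~\eqref{eq:gammas 2}. Each $\mathbf\Omega_j$ contains the product $(T_{\mathrm p}P_{\mathrm p,\ell}+T_{\mathrm d}P_{\mathrm d,\ell})\mathbf R_{\cdot,\ell}$, so the product rule gives
\begin{align*}
\acute{t_{\mathrm N,s}}&=\mathbf u_{\mathrm s}^{\mathsf H}\Big[T_{\mathrm p}\mathbf R_{\hat{\mathrm g},\ell}+(T_{\mathrm p}P_{\mathrm p,\ell}+T_{\mathrm d}P_{\mathrm d,\ell})\,\partial\mathbf R_{\hat{\mathrm g},\ell}\Big]\mathbf u_{\mathrm s},
\end{align*}
and $\acute{t_{\mathrm D,s}}$ is obtained in the same way with $\mathbf R_{\mathrm{err},\ell}$ in place of $\mathbf R_{\hat{\mathrm g},\ell}$. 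The quotient rule then yields $(t_{\mathrm D,s}\acute{t_{\mathrm N,s}}-t_{\mathrm N,s}\acute{t_{\mathrm D,s}})/t_{\mathrm D,s}^2$.

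For the penalty structure, differentiating $-\omega f-\tfrac{1}{2\xi}f^2$ gives $-(\omega+f/\xi)\,\partial f$. Here $\partial f_k=-\partial\bar R_k^{\mathrm{lb}}/R_{\mathrm{th},k}$ and $\partial f_{\mathrm s}=-\partial\bar\gamma_{\mathrm s}/\gamma_{\mathrm{s,th}}$, while $\bm\uptau$ contributes no derivative. Collecting the coefficient of $\partial\bar R_k^{\mathrm{lb}}$ gives the prefactor $1+\omega_k/R_{\mathrm{th},k}+f_k/(R_{\mathrm{th},k}\xi)$, and the sensing term follows similarly.

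\textbf{Main obstacle.} I expect the difficulty to be bookkeeping rather than conceptual. Specifically, I need to keep track of the explicit $T_{\mathrm p}P_{\mathrm p,\ell}$ weight alongside the implicit dependence through $\mathbf A_\ell^{-1}$, and to fix the sign and ordering convention of the sensing quotient term so that it matches the displayed~\eqref{eq:grad_Pd_closed}. I will check this against the definitions of $\acute{t_{\mathrm N,s}}$ and $\acute{t_{\mathrm D,s}}$ in Appendix~\ref{sec:Proof-grad_Pd}.
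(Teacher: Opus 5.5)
Your route is the paper's own. You split $\nabla_{P_{\mathrm p,\ell}}\mathcal L_{\bm{\upomega},\xi}$ into $\sum_k\nabla\bar R_k^{\mathrm{lb}}$ plus penalty terms with coefficients $\omega+f/\xi$, write $\bar R_k^{\mathrm{lb}}\propto\ln t_{\mathrm N,k}-\ln t_{\mathrm D,k}$, differentiate $\mathbf R_{\mathrm{est},\ell}$ through the derivative of $\mathbf C_\ell^{-1}$ (your $\mathbf A_\ell$), and apply the quotient rule to $\bar\gamma_{\mathrm s}=t_{\mathrm{N,s}}/t_{\mathrm{D,s}}$. Your compact form $\sigma^2T_{\mathrm p}\mathbf R_\ell\mathbf C_\ell^{-2}\mathbf R_\ell$ is the paper's two-term expression with $\mathbf C_\ell-P_{\mathrm p,\ell}T_{\mathrm p}\mathbf R_\ell=\sigma^2\mathbf I$ factored out. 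Your $\acute{t_{\mathrm{N,s}}}$ expands to exactly the paper's. The sign worry also resolves: $-(\omega_{\mathrm s}+f_{\mathrm s}/\xi)\,\partial f_{\mathrm s}$ with $\partial f_{\mathrm s}=-\partial\bar\gamma_{\mathrm s}/\gamma_{\mathrm{s,th}}$ is precisely the last term of~\eqref{eq:grad_Pd_closed}.

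The one step that will not reproduce the stated formula is $\acute{t_{\mathrm{D,s}}}$. You differentiate $\mathbf R_{\mathrm{err},\ell}=\mathbf R_\ell-\mathbf R_{\hat{\mathrm h},\ell}-\mathbf R_{\hat{\mathrm g},\ell}$ using the two separate MMSE covariances. The paper's proof instead takes $\mathbf R_{\mathrm{err},\ell}=\mathbf R_\ell-\mathbf R_{\mathrm{est},\ell}=\sigma^2\mathbf R_\ell\mathbf C_\ell^{-1}$, the error covariance of $\mathbf z_\ell-\hat{\mathbf z}_\ell$. That choice gives $\acute{t_{\mathrm{D,s}}}=\sigma^2T_{\mathrm p}\mathbf u_{\mathrm s}^{\mathsf H}\mathbf R_\ell\mathbf C_\ell^{-1}\mathbf u_{\mathrm s}-\sigma^2(T_{\mathrm p}^2P_{\mathrm p,\ell}+T_{\mathrm p}T_{\mathrm d}P_{\mathrm d,\ell})\mathbf u_{\mathrm s}^{\mathsf H}\mathbf R_\ell\mathbf C_\ell^{-1}\mathbf R_\ell\mathbf C_\ell^{-1}\mathbf u_{\mathrm s}$.

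These are not the same function of $P_{\mathrm p,\ell}$. Computed literally, $\mathbf R_{\hat{\mathrm h},\ell}+\mathbf R_{\hat{\mathrm g},\ell}=\mathbf R_{\mathrm{est},\ell}-P_{\mathrm p,\ell}T_{\mathrm p}(\mathbf R_{\mathrm h,\ell}\mathbf C_\ell^{-1}\mathbf R_{\mathrm g,\ell}+\mathbf R_{\mathrm g,\ell}\mathbf C_\ell^{-1}\mathbf R_{\mathrm h,\ell})$. This cross-covariance term is dropped by the equality $\mathbf R_{\mathrm{est},k}=\mathbf R_{\hat{\mathrm h},k}+\mathbf R_{\hat{\mathrm g},k}$ written in Theorem~\ref{thm:bar gamma k}. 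Your componentwise derivative therefore carries extra $\mathbf R_{\mathrm h,\ell}$--$\mathbf R_{\mathrm g,\ell}$ cross terms, in both $\mathbf u_{\mathrm s}^{\mathsf H}\mathbf R_{\mathrm{err},\ell}\mathbf u_{\mathrm s}$ and its derivative. It will not collapse to the displayed $\acute{t_{\mathrm{D,s}}}$.

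The fix is immediate: use $\mathbf R_{\mathrm{err},\ell}=\sigma^2\mathbf R_\ell\mathbf C_\ell^{-1}$, so that $\partial\mathbf R_{\mathrm{err},\ell}=-\partial\mathbf R_{\mathrm{est},\ell}=-\sigma^2T_{\mathrm p}\mathbf R_\ell\mathbf C_\ell^{-1}\mathbf R_\ell\mathbf C_\ell^{-1}$, which you have already computed. With that substitution the rest of your plan matches the paper's argument.
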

\begin{IEEEproof}
	Please see Appendix~\ref{sec:Proof-grad_Pd}.
\end{IEEEproof}
The update of other parameters at the $(\imath)$-th iteration can be expressed as
\setcounter{equation}{29}
\begin{align}
	&\tau_{k}^{(\imath)}=
	\max\left\{0,\frac{1}{R_{\mathrm{th},k}}{\bar  R_{k}^{\text{lb}}\big(\mathbf{p}_{\text{p}}^{(\imath)}\big)}-1-\omega_{k}\xi\right\},
	\label{eq:tau_c}
	\\
	&\tau_{\mathrm{s}}^{(\imath)}=
	\max\left\{0,\frac{1}{\gamma_{\mathrm{s,th}}}{\bar\gamma_{\mathrm{s}}\big(\mathbf{p}_{\text{p}}^{(\imath)}\big)}-1-\omega_{\mathrm{s}} \xi\right\},
	\label{eq:tau_s}
	\\
	&\omega_{k} \leftarrow \omega_{k}+\frac{1}{\xi}  f_{k}\big(\mathbf{p}_{\text{p}}^{(\imath)},\tau_k^{(\imath)}\big), \label{eq:omega_c}
	\\
	&
    \omega_{\mathrm{s}} \leftarrow  \omega_{\mathrm{s}} + \frac{1}{\xi} f_{\mathrm{s}}\big(\mathbf{p}_{\text{p}}^{(\imath)},\tau_{\text{s}}^{(\imath)}\big), \label{eq:omega_s}
	\\
	&\xi \leftarrow \eta\xi,  \label{eq:xi}  
\end{align}
where $\eta$ denotes the growth factor of the penalty parameter.
    
\subsubsection{Optimal data-transmission power allocation $\mathbf{p}_{\mathrm{d}}$}
For the case when $\mathbf{p}_{\mathrm{p}}$ is kept fixed, we can obtain a \emph{concave} lower-bound on~\eqref{P3a} as
\begin{align}
	\label{eq:P2-obj-DC}
	&{\bar R}_{\text{sum}}^{\text{lb}}\left( \mathbf{p}_\text{d} \right) 
	= \frac{{{T}_{\text{d}}}}{T\ln\left(2\right)}\sum\nolimits_{k\in \mathcal{K}}\ln\left( 1+ \frac{{{\varphi }_{k} {P}_{\text{d},k}}}{\sum\nolimits_{{i\in \mathcal{K}}}{{\varsigma }_{k,i}{{P}_{\text{d},i}}}+\sigma^2}\right)
	\notag \\ 
	&\overset{\texttt{(a)}}{\mathop{=}}  \frac{T_{\mathrm{d}}}{T\ln(2)} \sum\nolimits_{k\in\mathcal{K}}\left[\ln\left(\sigma^2+{\varphi }_{k} P_{\mathrm{d},k}   +\sum\nolimits_{i\in\mathcal{K}}{\varsigma }_{k,i}{{P}_{\text{d},i}}\right)\right.
	\notag \\ 
	& \ \ \ 
	\left.-\ln\left(\sigma^2+\sum\nolimits_{i\in\mathcal{K}}{\varsigma }_{k,i}{{P}_{\text{d},i}}\right)\right]
	\notag \\ 
	& \ge \frac{T_{\mathrm{d}}}{T\ln(2)} \sum\nolimits_{k\in\mathcal{K}}\left[\ln\left(\sigma^2+{\varphi }_{k} P_{\text{d},k}   +\sum\nolimits_{i\in\mathcal{K}}{\varsigma }_{k,i}{{P}_{\text{d},i}}\right)\right.
	\notag \\ 
	& \ \ \ 
	\left.-\ln\left(\sigma^2+\sum\nolimits_{i\in\mathcal{K}}\!{\varsigma }_{k,i}{{P}_{\text{d},i}^{(l-1)}}\right)
	\!- \!\frac{\sum\nolimits_{i\in\mathcal{K}}\!{\varsigma }_{k,i}\!\left(P_{\text{d},i} \! - \! P_{\text{d},i}^{(l-1)}\right)\!}{\sigma^2+\sum\nolimits_{i\in\mathcal{K}}\!{\varsigma }_{k,i}P_{\text{d},i}^{(l-1)}} \right].
	\notag \\ 
	& \triangleq 
	\sum\nolimits_{k\in\mathcal{K}} {\bar R}_{k}^{\text{lb}}\left( \mathbf{p}_\text{d};\mathbf{p}_\text{d}^{(l-1)} \right) ={\bar R}_{\text{sum}}^{\text{lb}}\left( \mathbf{p}_\text{d};\mathbf{p}_\text{d}^{(l-1)} \right),
\end{align}
where ${{\varphi }_{k}} = \text{tr}(\mathbf{R}_{\mathrm{est},k})$,
${{\varsigma }_{k,i}}=\text{tr}(\mathbf{R}_{i}\mathbf{R}_{\mathrm{est},k})/\text{tr}(\mathbf{R}_{\mathrm{est},k}), \forall k,i\in\mathcal{K}$
and $\mathbf{p}_{\mathrm{d}}^{(l-1)}$ is a feasible solution in the $(l-1)$-th iteration.
Note that ${\bar R}_{\text{sum}}^{\text{lb}}\left( \mathbf{p}_\text{d};\mathbf{p}_\text{d}^{(l-1)} \right)$ is a \textit{concave lower-bound} on $\bar R_{\mathrm{sum}}^{\text{lb}}(\mathbf{p}_{\mathrm{d}})$ obtained by linearizing the second term in the right-hand side of $\texttt{(a)}$. 
Since~\eqref{P3b} is already convex, we focus on the constraint in~\eqref{P3c}, which can be further written as 
\begin{align}
	\label{eq:Rkpd-DCP}
	& \ \  \ 1+ \frac{{{\varphi }_{k} {P}_{\text{d},k}}}{\sum\nolimits_{{i\in \mathcal{K}}}{{\varsigma }_{k,i}{{P}_{\text{d},i}}}+\sigma^2}\geq e^{\frac{R_{\mathrm{th},k}T\ln(2)}{T_{\mathrm{d}}}} \triangleq \gamma_{\mathrm{th},k}
	\notag \\  &
	\! \!\!\! \Rightarrow \!
	\sigma^2\!+\!{\varphi }_{k} P_{\text{d},k} \!+\!\sum\limits_{i\in\mathcal{K}}{\varsigma }_{k,i}{{P}_{\text{d},i}} \! \geq \!
	\gamma_{\mathrm{th},k} \!\Bigg(\!\sigma^2\! +\!\sum\limits_{i\in\mathcal{K}}{\varsigma }_{k,i}{{P}_{\text{d},i}}\!\Bigg)\!,
\end{align}
which is a linear constraint w.r.t. $\mathbf{p}_{\mathrm{d}}$. 
Next, we are now left with the non-convex constraint in~\eqref{P3d}, which can be further written as 
\begin{align}
	\label{eq:sensingSINR-DCP}
	&T_{\mathrm{d}}\sum\nolimits_{k\in\mathcal{K}}P_{\mathrm{d},k}\mathbf{u}_{\mathrm{s}}^{\mathsf H}\big(\mathbf{R}_{\mathbf{\hat g},k}-\gamma_{\mathrm{s,th}}\mathbf{R}_{\mathrm{err},k}\big)\mathbf{u}_{\mathrm{s}}-\gamma_{\mathrm{s,th}}T\sigma^2\|\mathbf{u}_{\mathrm{s}}\|_2^{2}
	\notag \\
	&+T_{\mathrm{p}}\sum\nolimits_{k\in\mathcal{K}}P_{\mathrm{p},k}\mathbf{u}_{\mathrm{s}}^{\mathsf H}\big(\mathbf{R}_{\mathbf{\hat g},k}-\gamma_{\mathrm{s,th}}\mathbf{R}_{\mathrm{err},k}\big)\mathbf{u}_{\mathrm{s}}
	\geq0.
\end{align}
Therefore a convex equivalent of $\left(\mathcal{P}_3 \right) $ with fixed $\mathbf{p}_{\mathrm{p}}$ can be given by 
\begin{align}
	\label{P5}
	\boxed{\left(\mathcal{P}_5 \right) \max_{\mathbf{p}_{\mathrm{d}}\geq\mathbf{0}}\ \big\{ {\bar R}_{\text{sum}}^{\text{lb}}\left( \mathbf{p}_\text{d};\mathbf{p}_\text{d}^{(l-1)} \right)\big|\eqref{P3b}, (\ref{eq:Rkpd-DCP}),(\ref{eq:sensingSINR-DCP}) \big\}.}
\end{align}
Note that $(\mathcal P 5)$ is a convex problem which can be solved efficiently using off-the-shelf convex solvers, \emph{e.g.}, CVX. A detailed routine for obtaining the optimal pilot and data power allocation is outlined in \textbf{Algorithm~\ref{algorithm-AO-P}}. 

\begin{algorithm}[tb]
	\caption{The proposed AO algorithm for optimal power allocation.}
	\label{algorithm-AO-P}
	
	\KwIn{$\mathbf{p}_\text{p}^{(0)}$, $\mathbf{p}_\text{d}^{(0)}$,
		$\xi$,
		$\delta$, $\eta$}
	
	\KwOut{$\mathbf{p}_\text{p}^{\text{opt}}$, $\mathbf{p}_\text{d}^{\text{opt}}$}
	
	$l \leftarrow 1$\;
	
	\Repeat{convergence}{
		
		\tcc{\textcolor{brown}{Optimal $\mathbf{p}_{\text{p}}$}}
		$\imath \leftarrow 1$\;
		
		\Repeat{convergence}{
			
			\Repeat{convergence}{
				
				Obtain $\nabla_{\mathbf{p}_{\text{p}}}\mathcal{L}_{\bm{\upomega},\xi}\big(\mathbf{p}_{\text{p}}^{(\imath-1)},\bm{\uptau}^{(\imath-1)}\big)$
				via \textbf{Theorem~\ref{thm:grad_Pd}}\;
				
				\tcc{\textcolor{brown}{Update $\mathbf{p}_{\text{p}}$ and $\bm{\uptau}$}}
				Obtain $\mathbf{p}_{\text{p}}^{(\imath)}$ via~\eqref{eq:Update_Pd}\;
				$\mathbf{p}_{\text{p}}^{(\imath)} \leftarrow  \min \left\{\mathbf{p}_{\text{p}}^{(\imath)},\frac{PT-{{T}_{\text{d}}}{\mathbf{p}_{\text{d},k}^{(\imath-1)}}}{{{T}_{\text{p}}}}\right\}$\;
				Obtain $\tau_{k}^{(\imath)}, \forall k\in\mathcal{K}$ and $\tau_{\mathrm{s}}^{(\imath)}$ via~\eqref{eq:tau_c} and~\eqref{eq:tau_s}, respectively\;
			}
			\tcc{\textcolor{brown}{Update $\bm{\upomega}$ and $\xi$}}
			Update $\omega_{k}, \forall k\in\mathcal{K}$ and $\omega_{\mathrm{s}}$ via~\eqref{eq:omega_c} and~\eqref{eq:omega_s}, respectively\;
			Update $\xi$ via~\eqref{eq:xi}\;
			$\imath\leftarrow\imath+1$\;
		}
		$\mathbf{p}_\text{p}^{(l)}\leftarrow \mathbf{p}_\text{p}^{(\imath)}$\;
		
		\tcc{\textcolor{brown}{Optimal $\mathbf{p}_{\text{d}}$}}
		Obtain $\mathbf{p}_{\mathrm{d}}^{(l)}$ by solving $\left(\mathcal{P}_5 \right) $ for given $\mathbf{p}_{\mathrm{d}}^{(l-1)}$ and $\mathbf{p}_{\mathrm{p}}^{(l)}$\;
		$l \leftarrow l+1$\;
	} 
	
	$\mathbf{p}_\text{p}^{\text{opt}}\leftarrow \mathbf{p}_\text{p}^{(l)}$, $\mathbf{p}_\text{d}^{\text{opt}}\leftarrow \mathbf{p}_\text{d}^{(l)}$
\end{algorithm}

\subsection{Optimizing Receive Beamforming After Transmitting Pilots}
Once the optimal power allocation is obtained, assuming the MRC-based receiver combining, pilots are transmitted to perform channel estimation using an MMSE estimator. Following this, we derive closed-form expressions for the instantaneous SINR for the communication users and sensing. Based on the derived expressions, we formulate the problem of maximizing the instantaneous achievable sum rate, while guaranteeing the communication and sensing QoS.
The post-combining received signal at the BS for user-$k$ can be rewritten as
\begin{align}
	\label{eq:yd,k 2}
	{{\mathbf{y}}_{\text{d},k}}
	=\mathbf{u}_{k}^{\mathsf H}\sum\nolimits_{i\in \mathcal{K}}{\sqrt{{{P}_{\text{d},i}}}\left( \mathbf{\hat{z}}_{k}+{\bm{\upepsilon }_{i}}+{{\bm{\upvarepsilon }}_{i}} \right)\mathbf{x}_{\text{d},i}^{\mathsf T}}+\mathbf{u}_{k}^{\mathsf H}{{\mathbf{N}}_{\text{d}}},
\end{align}
where $\mathbf{\hat{z}}_{k}=\hat{\mathbf{h}}_k+{{{\mathbf{\hat{g}}}}_{k}},\ \forall k\in\mathcal{K}$.
Then, the instantaneous SINR for decoding $\mathbf{x}_{\text{d},k}$ at the BS can be expressed as
\begin{align}
	{{\gamma }_{k}} \! =\! \frac{{{P}_{\text{d},k}}\!\left| \mathbf{u}_{k}^{\mathsf H}{{{\mathbf{\hat{z}}}}_{k}} \right|^{2}}{\!\sum\nolimits_{m\in {\mathcal{K}}\!\setminus\! \{k\}}\!{{{P}_{\text{d},m}}\!\left|\! \mathbf{u}_{k}^{\mathsf H}{{{\mathbf{\hat{z}}}}_{m}} \!\right|\!^{2}}\!\!+\!\!\sum\nolimits_{i\in \mathcal{K}}\!{{{P}_{\text{d},i}}\mathbf{u}_{k}^{\mathsf H}{{\mathbf{R}}_{\text{err},i}}{{\mathbf{u}}_{k}}}\!\!+\!\!\sigma _k^{2}}.
\end{align}
Then, the instantaneous achievable rate of user-$k$ is
$ R_{k} =\frac{{{T}_{\text{d}}}}{T}\log_2\left( 1+{{\gamma }_{k}} \right)$ and the instantaneous sum rate of $K$ communication users is $ R_{\text{sum}} =\sum\nolimits_{k\in \mathcal{K}} R_{k}$.
Similarly, the instantaneous sensing SINR can be written as 
\begin{align}
	{{\gamma }_{\text{s}}}\!=\!\frac{\sum\nolimits_{k\in \mathcal{K}}\!{\left( {{T}_{\text{p}}}{{P}_{\text{p},k}}\!+\!{{T}_{\text{d}}}{{P}_{\text{d},k}} \right)\left| \mathbf{u}_{\text{s}}^{\mathsf H}{{{\mathbf{\hat{g}}}}_{k}} \right|^{2}}}{\sum\nolimits_{k\in \mathcal{K}}\!{\left( {{T}_{\text{p}}}{{P}_{\text{p},k}}\!+\!{{T}_{\text{d}}}{{P}_{\text{d},k}} \right)\mathbf{u}_{\text{s}}^{\mathsf H}{{\mathbf{R}}_{\text{err},k}}\mathbf{u}_{\text{s}}}\!+\!T \left\|\mathbf{u}_{\text{s}} \right\|_2^2
		\sigma^2}.
	\label{eq:gammas 3}
\end{align}

The optimization problem is then formulated as
\begin{subequations}
	\label{P6}
	\begin{align}
		\left(\mathcal{P}_{6}\right)  \max_{\mathbf{U}}\quad & R_{\text{sum}}\left( \mathbf{U} \right) \label{P6a}\\ 
		\text{s.t.} \quad & 
		R_{k}\left( \mathbf{U} \right)\ge {{R}_{\text{th},k}},\forall k\in \mathcal{K}, \label{P6b}\\
		& \gamma_\text{s}\left( \mathbf{U} \right)\ge \gamma _{\text{s,th}}, \label{P6c}\\ 
		& \left\|\mathbf{u}_{\text{s}} \right\|_2^2=1,\left\|\mathbf{u}_{k} \right\|_2^2=1,\forall k\in \mathcal{K}. \label{P6d} 
	\end{align}
\end{subequations}
We first note that the objective is decoupled with $\mathbf{u}_{\text{s}}$, and that 
$\gamma_\text{s}$ and $ \bar \gamma_\text{s}$ are highly similar in structure. Therefore, the optimal solution for $\mathbf{u}_{\text{s}}$ can be derived based on~\eqref{eq:us-opt}. After dropping the sensing QoS constraint, the problem can be rewritten as
\begin{align}
	\label{P7} 
	\boxed{\left(\mathcal{P}_7 \right) \max_{\mathbf{U}_{\text{c}}}\ \big\{~\eqref{P6a}\big|\eqref{P6b}, \left\|\mathbf{u}_{k} \right\|_2^2=1,\forall k\in \mathcal{K} \big\}.}
\end{align}
Define $\tilde{x}_k = \left(\mathbf{u}_{k}^{(n)}\right)^{\mathsf H}{{{\mathbf{\hat{z}}}}_{k}}$ and $\tilde{y}_k = \left\|\mathbf{u}_{k}^{(n)} \right\|_2^2 \sigma^2 + \sum\nolimits_{m\in \mathcal{K} \setminus \{k\}} {{{P}_{\text{d},m}} \left|\left(\mathbf{u}_{k}^{(n)}\right)^{\mathsf H}{{{\mathbf{\hat{z}}}}_{m}} \right|^{2}} + \sum\nolimits_{i\in \mathcal{K}} {{{P}_{\text{d},i}}\left(\mathbf{u}_{k}^{(n)}\right)^{\mathsf H}{{\mathbf{R}}_{\text{err},i}}{{\mathbf{u}}_{k}^{(n)}}} $, then following~\cite{25-WCNC-RIS-MISO}, a \emph{concave quadratic minorant} of $R_k(\mathbf{u}_k)$ is given by
\begin{align}
	&R_k(\mathbf{u}_k) \geq  \frac{T_{\mathrm{d}}}{T\ln(2)} \left[\ln \left(1 + \frac{P_{\text{d},k} |\tilde{x}_k|^2}{\tilde{y}_k }\right) - \frac{P_{\text{d},k} |\tilde{x}_k|^2}{\tilde{y}_k } 
	\right.
	\notag \\ &
	\left.\!-\! a_k \sum\nolimits_{m\in {\mathcal{K}}}\!{{{P}_{\text{d},m}}\!\left| \mathbf{u}_{k}^{\mathsf H}{{{\mathbf{\hat{z}}}}_{m}} \right|^{2}} - a_k  \sum\nolimits_{i\in \mathcal{K}}\!{{{P}_{\text{d},i}}\mathbf{u}_{k}^{\mathsf H}{{\mathbf{R}}_{\text{err},i}}{{\mathbf{u}}_{k}}}\!
	\right.
	\notag \\ &
	\left. \!-\!a_k \left\|\mathbf{u}_{k} \right\|_2^2 \sigma^2 \!+\! \frac{ P_{\text{d},k}}{\tilde{y}_k }\left(\tilde{x}_k^{\mathsf H}\mathbf{u}_{k}^{\mathsf H}\mathbf{\hat{z}}_k \!+\! \tilde{x}_k \mathbf{\hat{z}}_k^{\mathsf H} \mathbf{u}_{k} \right) \right] \!\triangleq\!  \tilde{R}_k(\mathbf{u}_k),
\end{align}
where $a_k = \frac{P_{\text{d},k} |\tilde{x}_k|^2}{\tilde{y}_k \left(P_{\text{d},k} |\tilde{x}_k|^2+\tilde{y}_k\right)}$.
Hence, a convex equivalent of~$\left(\mathcal{P}_7 \right) $ can be written as
\begin{subequations}
	\label{P8}
	\begin{align}
		\left(\mathcal{P}_{8}\right) \max_{\mathbf{U}_{\text{c}}}\  & \sum\nolimits_{k\in \mathcal{K}} 
		\left( \tilde{R}_k(\mathbf{u}_k) - \wp  \left\|\mathbf{u}_{k} \right\|_2^2  \right)\label{P8a}\\ 
		  \text{s.t.} \   &
		\tilde{R}_k(\mathbf{u}_k) \ge {{R}_{\text{th},k}},\forall k\in \mathcal{K}, \label{P8b}\\
		& 
		\left\|\mathbf{u}_{k} \right\|_2^2 \leq 1,\forall k\in \mathcal{K}, \label{P8c} 
	\end{align}
\end{subequations}
where we have relaxed the equality constraints in~\eqref{P7} with inequalities in~\eqref{P8c}. Moreover, to make sure that~\eqref{P8c} is binding at convergence, we have regularized the objective with $ \wp > 0$ being the regularization parameter~\cite{23-WCL-RIS}. 
One can note that $\left(\mathcal{P}_8 \right) $ is a convex problem and can be solved using an off-the-shelf solver, \emph{e.g.} CVX.
Finally, the proposed algorithm for optimizing the receive beamforming matrix $\mathbf{U}_{\text{c}}$ is summarized in \textbf{Algorithm~\ref{algorithm-SCA-U}}.

\begin{algorithm}[tb]
	\caption{The proposed SCA method for optimal receive beamforming.}
	\label{algorithm-SCA-U}
	
	\KwIn{$\mathbf{U}_{\text{c}}^{(0)}$, $\wp$}
	
	\KwOut{$\mathbf{U}_{\text{c}}^{\text{opt}}$}
	$n \leftarrow 0$\;
	\Repeat{convergence}{
		Obtain $\mathbf{U}_{\text{c}}^{(n+1)}$ by solving $\left(\mathcal{P}_8 \right) $ \;
		$n \leftarrow n+1$\;
	} 
	$\mathbf{U}_{\text{c}}^{\text{opt}}\leftarrow \mathbf{U}_{\text{c}}^{(n)}$
\end{algorithm}

\subsection{Complexity Analysis}
We now analyze the computational complexity of \textbf{Algorithm~\ref{algorithm-AO-P}} and \textbf{Algorithm~\ref{algorithm-SCA-U}}.
The complexity of statistical-CSI-based receive beamforming design is $\mathcal{O}(N_{\text{b}}K + N_{\text{b}}^3)$.
Furthermore, the AO-based power allocation algorithm has a \emph{per-iteration} complexity of $\mathcal{O} \left(K^2\log(1/\epsilon )/\epsilon  + K^{3.5}\right) $, where $\epsilon$ denotes the convergence tolerance.
Under instantaneous CSI, the proposed SCA-based algorithm for optimizing the receive beamforming vectors of the communication users has a \emph{per-iteration} complexity of $\mathcal{O}\left(N_{\text{b}}^{1.5}K^{2.5} + N_{\text{b}}^3K^3 \right)$.
In addition, optimizing the receive beamforming vector of the sensing target requires $\mathcal{O}(N_{\text{b}}^3)$ operations.
Since in practical systems it typically holds that $N_{\text{b}} > K$, the overall \emph{per-iteration} complexity of the proposed optimization framework can be simplified to $\mathcal{O} \left(N_{\text{b}}^3K^3 \right)$.
Moreover, the complexities of the benchmark schemes employing ZF and MRC as the receive beamforming strategies scale as $ \mathcal{O}(K^3+K^2 N_{\text{b}})$ and $ \mathcal{O}(1)$, respectively. Consequently, for a fixed $\epsilon$, the overall \emph{per-iteration} complexity of both the ZF- and MRC-based benchmark schemes is dominated by the optimal power allocation and can be approximated as $\mathcal{O} \left(K^{3.5} \right) $.

\section{Simulation Results}
In this section, comprehensive numerical results are presented to evaluate the performance of the proposed ISAC system.
To facilitate channel estimation and reduce pilot contamination, the pilot sequences are selected from the columns of a unitary discrete Fourier transform (DFT) matrix.
Besides, the data symbols are assumed to follow a standard complex Gaussian distribution, which is commonly used for rate analysis.
In addition, the large-scale path loss for user-$k$ to the BS links is modeled as $ L_{\mathrm{h},k}=\left(d_{\text{u2b},k}/d_{0}\right)^{-\alpha_{\text{u2b}}}$, where $d_{\text{u2b},k}$ denotes the distance between user-$k$ and the BS, $d_{0}$ denotes the reference distance, and $\alpha_{\text{u2b}}$ denotes the path loss exponent for the communication channel between the user and the BS. 
The covariance matrix of $\mathbf{g}_{k}$ as $\mathbf{R}_{\mathrm{g},k}=\alpha_{\text{RCS}}L_{\mathrm{g},k}\sigma_{\mathrm{g},k}^{2}\mathbf{a}\left(\theta_\text{t}\right)\mathbf{a}^{\mathsf H}\left(\theta_\text{t}\right)$, where $L_{\mathrm{g},k}=\left(d_{\text{u2t},k}/d_{0}\right)^{-\alpha_{\text{u2t}}}\left(d_{\text{t2b}}/d_{0}\right)^{-\alpha_{\text{t2b}}}$ is the large-scale path loss for user-$k$ to target then to the BS. $d_{\text{u2t},k}$ and $d_{\text{t2b}}$ represent the distances from user-$k$ to the target and the target to the BS, respectively. $\alpha_{\text{u2t}}$ and $\alpha_{\text{t2b}}$ represent the path loss exponent for the two paths, respectively.
The BS is at the origin $(0,0)$~m and the target is at $\left(d_{\text{t2b}}/\sqrt{2},\ d_{\text{t2b}}/\sqrt{2}\right)$~m. 
We assume that $K$ communication users are randomly distributed within a circular area of radius $R$ centered at $(100,0)$~m. 
Unless otherwise stated, we set
$ K = 4 $, $ N_{\text{b}} = 6 $, $P = 10~\text{dBm}$,
$T = 100$, $T_{\text{p}} = K+1$,
$\sigma^{2} = -70 \ \text{dBm/Hz}$,
$P_{\text{FA}} = 10^{-5}$,
$\alpha_{\text{u2b}} = 3.6$,
$\alpha_{\text{u2t}} = \alpha_{\text{t2b}} = 2.2$,
$\alpha_{\text{RCS}} = 0.8$,
$d_0 = 1 \ \text{m}$,
$d_{\text{t2b}} = 50 \ \text{m}$,
$R  = 100 \ \text{m} $,
$\xi = 1$,
$\delta = 1$, $\eta = 1.5$,
$ \wp  = 10^{-2}$,
$ R_{\text{th},k} = 1 \ \text{bps/Hz}, \forall k\in \mathcal{K}$, $P_\text{D,th} = 0.99 $, and $\epsilon = 10^{-3}$.

\subsection{Algorithm Performance Evaluation}

\begin{figure}[t] 
	\centering
	\includegraphics[width=0.355\textwidth]
	{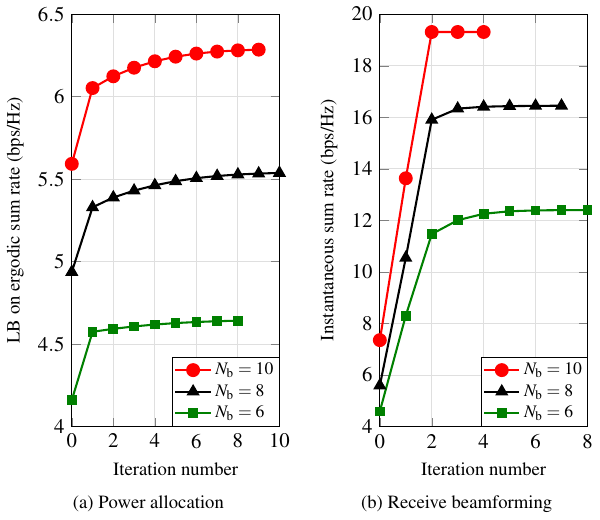}		
	\caption{Convergence results for the power allocation (\textbf{Algorithm~\ref{algorithm-AO-P}}) on the left and the receive beamforming (\textbf{Algorithm~\ref{algorithm-SCA-U}}) on the right.} 
	\label{Fig_Convergence}
\end{figure}
Figure~\ref{Fig_Convergence} illustrates the convergence behavior of the proposed power allocation (\textbf{Algorithm~\ref{algorithm-AO-P}}) and receive beamforming (\textbf{Algorithm~\ref{algorithm-SCA-U}}) algorithms for different values of $N_\text{b}$. Both algorithms produce non-decreasing iterates and converge within a few iterations, confirming their fast convergence and numerical stability.
Note that the initial instantaneous sum rate in Fig.~\ref{Fig_Convergence}(b) may be lower than the converged value in Fig.~\ref{Fig_Convergence}(a). This is because \textbf{Algorithm~\ref{algorithm-AO-P}} (in Fig.~\ref{Fig_Convergence}(a)) optimizes a lower bound on the \emph{ergodic} sum rate (based on statistical CSI), while \textbf{Algorithm~\ref{algorithm-SCA-U}} (in Fig.~\ref{Fig_Convergence}(b)) maximizes the \emph{instantaneous} sum rate (based on instantaneous CSI).
The observed improvement in the instantaneous sum rate emphasizes the benefit of optimizing receive beamforming after power allocation and validates the effectiveness of the joint design in leveraging the estimated instantaneous channel to enhance overall system performance.

\begin{figure}[t] 
	\centering
	\includegraphics[width=0.8\columnwidth]{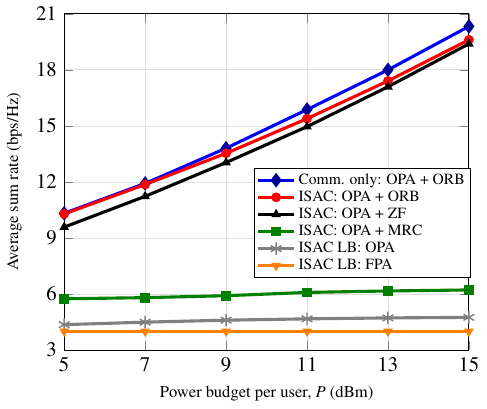}			
	\caption{Impact of the power budget per user on the average sum rate.}
	\label{Fig_P_compare}
\end{figure}
 
In Fig.~\ref{Fig_P_compare}, we present the impact of the power budget per user $P$ on the average sum rate. 
First, we evaluate the effectiveness of the proposed power allocation algorithm. 
The ``ISAC LB: FPA'' serves as a baseline that employs a fixed feasible power allocation (FPA), whereas ``ISAC LB: OPA'' represents the lower-bound performance achieved by the proposed optimal power allocation (OPA) based on statistical CSI. 
This comparison clearly demonstrates the benefit of OPA. For instance, at $P = 15$~dBm, OPA achieves an 18.82\% performance gain over the fixed scheme.
Next, we assess the average sum rate over 100 channel realizations for different receive beamforming strategies, all based on instantaneous CSI with OPA.
The ``ISAC: OPA + MRC'' represents an MRC receiver whose performance is fundamentally limited, as it only maximizes the desired signal component without effectively mitigating multi-user interference (MUI) and target-echo interference.
In contrast, ``ISAC: OPA + ZF'' employs a ZF receiver, which effectively eliminates MUI by nullifying the interference subspace and thereby achieves a significant performance improvement.
Moreover, ``ISAC: OPA + ORB'' depicts the average sum rate achieved by the proposed optimal receive beamforming (ORB) design obtained via \textbf{Algorithm~\ref{algorithm-SCA-U}}.

A key observation is that ORB consistently outperforms ZF, with the performance gain being most significant at lower power budgets.
This result is critical, as it highlights the fundamental trade-off: in the noise-limited regime (low $P$), the ZF receiver's strict interference-nulling strategy (which is unnecessary when interference is weak) leads to severe noise enhancement and degraded performance.
In contrast, the proposed ORB adaptively balances the interference-noise trade-off, tolerating minor interference to avoid amplifying dominant noise and thereby achieving a higher sum rate.
As $P$ increases, the system becomes interference-limited, and both ORB and ZF gradually converge to an interference-nulling strategy, naturally narrowing the performance gap. 
Finally, to quantify the impact of sensing integration, we also include the communication-only system (\emph{i.e.}, without target sensing), denoted as ``Comm. only: OPA + ORB''. 
This confirms that integrating sensing into communication incurs only a negligible rate reduction. 
For example, at $P = 15$~dBm, the ISAC system experiences merely a 3.54\% performance loss compared with the communication-only counterpart, thanks to the efficient power allocation and receive beamforming design.

\subsection{Communication Performance Analysis}

\begin{figure}[t] 
	\centering
	\includegraphics[width=0.355\textwidth]{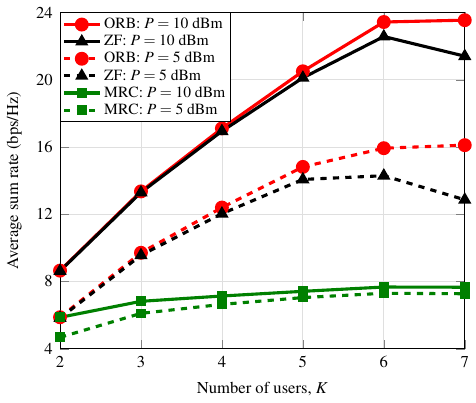}			
	\caption{Impact of the number of users on the average sum rate for $N_{\text{b}} = 8$.}
	\label{Fig_K}
\end{figure}

Figure~\ref{Fig_K} shows the impact of the number of users $K$ on the average sum rate for the ORB-, ZF-, and MRC-based schemes with $N_{\text{b}} = 8$. 
It is evident that the MRC-based scheme achieves the lowest sum rate due to severe MUI and target-echo interference. 
Although its performance slightly improves for small $K$, the sum rate quickly saturates as $K$ increases and interference becomes dominant, while increasing transmit power provides only marginal benefits. 
The key insight from this figure lies in the changing gap between the ZF and proposed ORB schemes. 
In the low-load regime (\emph{e.g.}, $K = 2 \text{ to } 4$), ORB achieves nearly identical performance to ZF, confirming the correctness of the proposed algorithm. 
As $K$ further increases, the ZF performance deteriorates sharply and eventually collapses when $K$ approaches $N_{\text{b}}$, due to severe noise enhancement caused by its strict interference-nulling strategy. 
In contrast, the proposed ORB remains stable and robust across all loading conditions. 
In the high-load region where ZF fails, ORB adaptively balances interference suppression and noise amplification, retaining minor residual interference to avoid excessive noise enhancement and thereby achieving a much higher sum rate. 
This demonstrates that ORB effectively mitigates the noise amplification issue inherent in ZF and provides superior robustness in dense multi-user ISAC systems.

\begin{figure}[t] 
	\centering
	\includegraphics[width=0.355\textwidth]{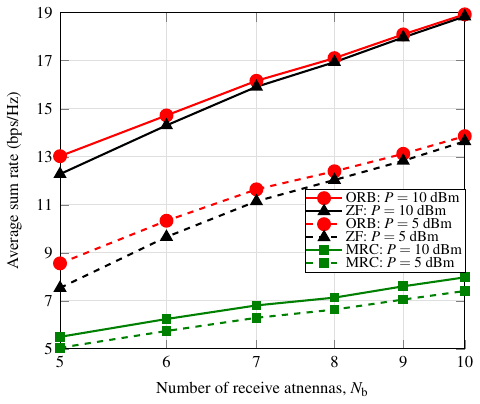}	  
    \caption{Impact of the number of receive antennas on the average sum rate.}
	\label{Fig_R_Nb}
\end{figure}

In Fig.~\ref{Fig_R_Nb}, we depict the impact of the number of receive antennas at the BS $N_{\text{b}}$ on the average sum rate. As $N_{\mathrm{b}}$ increases, the achievable sum rate improves due to the \emph{receive beamforming gain}. 
Consistent with the trend observed in Fig.~\ref{Fig_K}, the MRC scheme again exhibits the poorest performance. 
For the ZF and ORB schemes, the system becomes resource-constrained at an 80\% load ($K=4$, $N_{\mathrm{b}}=5$), representing the most challenging scenario for ZF, as also shown in Fig.~\ref{Fig_K}. 
For example, at $P=5$~dBm and $N_{\mathrm{b}}=5$, the ZF scheme achieves a sum rate of 7.54~bps/Hz, while the proposed ORB scheme attains 8.56~bps/Hz, corresponding to a 13.53\% improvement. 
As $N_{\mathrm{b}}$ increases, the system gains additional spatial degrees of freedom, and the ZF performance rapidly approaches that of ORB. 
Together with Fig.~\ref{Fig_K}, these results demonstrate that the proposed ORB maintains superior robustness under resource-limited conditions. 
Conversely, in resource-rich scenarios, ORB naturally converges to the near-optimal ZF solution, which further validates its robustness and consistency.

\begin{figure}[t] 
	\centering
	\begin{subfigure}[t]{0.355\textwidth}
		\centering
		\includegraphics[width=\textwidth]{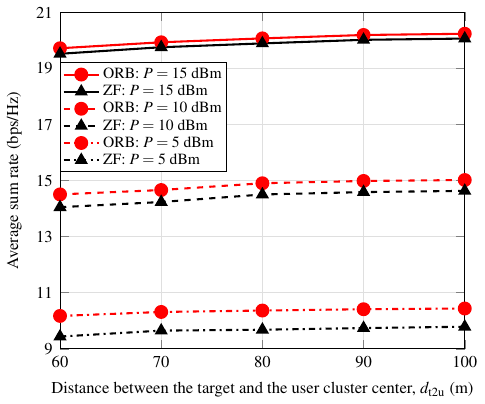}
		\caption{Sum rate}
		\label{Fig_D_R}
	\end{subfigure}
	\hfill
	\begin{subfigure}[t]{0.355\textwidth}
		\centering
		\includegraphics[width=\textwidth]{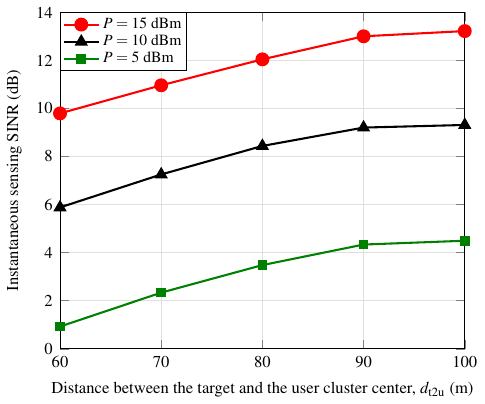}
		\caption{Sensing SINR}
		\label{Fig_D_SINR}
	\end{subfigure}
	\caption{Impact of the distance between the target and the user cluster center on the average sum rate and the sensing SINR.}
	\label{Fig_D_SINR_R}
\end{figure}

Figure~\ref{Fig_D_SINR_R} illustrates the impact of the distance between the target and the user cluster center, denoted as $d_{\mathrm{t2u}}$, on the average sum rate and the sensing SINR for the ORB- and ZF-based schemes. In this setup, the target moves along the $x$-axis between the BS and the center of the user cluster, such that $d_{\mathrm{t2u}}$ increases while $d_{\mathrm{t2b}}$ decreases.
As shown in Fig.~\ref{Fig_D_SINR_R}, both the sum rate (in Fig.~\ref{Fig_D_SINR_R}(a)) and the sensing SINR (in Fig.~\ref{Fig_D_SINR_R}(b)) improve as $d_{\mathrm{t2u}}$ increases, although their growth scales differ significantly.
The sensing SINR exhibits a pronounced increase, since in the bistatic sensing configuration the performance is predominantly governed by the target-to-BS distance $d_{\mathrm{t2b}}$. Although a larger $d_{\mathrm{t2u}}$ introduces higher path loss in the first hop (user-to-target), the substantial reduction in $d_{\mathrm{t2b}}$ markedly enhances the echo power received at the BS, resulting in a significant SINR gain.
In contrast, the sum rate increases much more gradually. This is because the communication performance remains dominated by the direct user--BS links, while the reflected user--target--BS path provides only a secondary contribution. As a result, the approaching target leads to a slow but consistent enhancement of the effective uplink channel conditions and a modest increase in the achievable sum rate.

\begin{figure}[t] 
	\centering
	\begin{subfigure}[t]{0.355\textwidth}
		\centering
		\includegraphics[width=\textwidth]{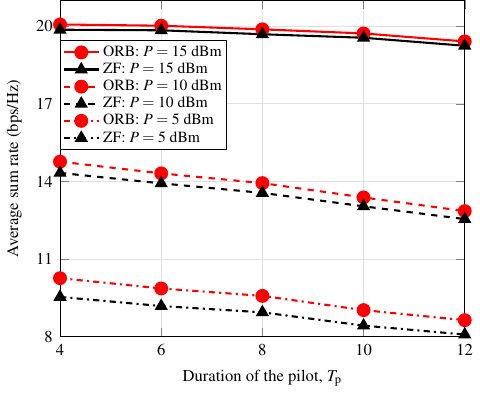}
		\caption{Sum rate}
		\label{Fig_Tp}
	\end{subfigure}
	\hfill
	\begin{subfigure}[t]{0.355\textwidth}
		\centering
		\includegraphics[width=\textwidth]{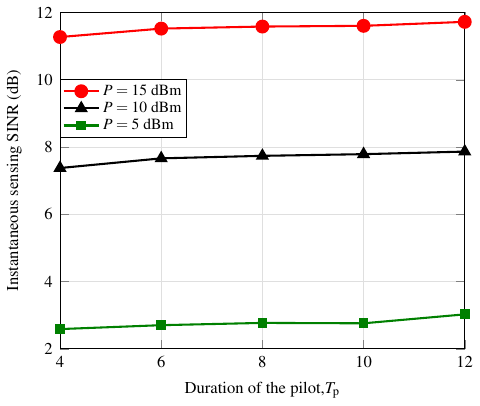}
		\caption{Sensing SINR}
		\label{Fig_Tp_SINR}
	\end{subfigure}
	\caption{Impact of the pilot duration on the sum rate and the sensing SINR.}
	\label{Fig_Tp_R_SINR}
\end{figure}

In Fig.~\ref{Fig_Tp_R_SINR}, we present the impact of the pilot duration $T_{\mathrm{p}}$ on the sum rate and sensing SINR, highlighting the inherent trade-off between communication and sensing. As shown in Fig.~\ref{Fig_Tp_R_SINR}~(b), the sensing SINR monotonically increases with $T_{\mathrm{p}}$ because longer pilots improve channel estimation accuracy, allowing the BS to suppress dominant user interference and detect the weak echo more reliably. Conversely, Fig.~\ref{Fig_Tp_R_SINR}~(a) shows that the average sum rate decreases with $T_{\mathrm{p}}$ due to the reduced data transmission time, $(T-T_p)/T$, which more than offsets the minor improvement from more accurate channel estimates. This figure thus clearly illustrates the conflict: Sensing performance favors longer pilots for more accurate channel estimation, whereas communication performance prefers shorter pilots to maximize data throughput.

\subsection{Sensing Performance Analysis}

\begin{figure}[t] 
	\centering
	\includegraphics[width=0.355\textwidth]{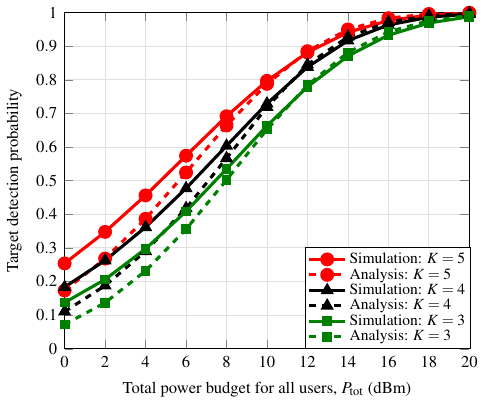}		
	\caption{Impact of the total power budget for all communication users on the target detection probability at $d_{\text{t2b}}=100\text{m}$.}
	\label{Fig_Ptot}
\end{figure}  

We further compare the theoretically derived target detection probability with its simulated counterpart, averaged over 100,000 Monte Carlo channel realizations. In this setup, the pilot and data transmission energy are equally allocated, with communication receive beamforming vectors obtained via MRC and the sensing target beamforming vector via EVD.
To highlight the effect of spatial diversity from varying $K$, the total transmit power is fixed at $P_\text{tot}$, and each user is assigned an equal share $P = P_\text{tot}/K$.
Fig.~\ref{Fig_Ptot} illustrates the impact of $P_\text{tot}$ on the target detection probability at $d_{\text{t2b}}=100$~m. The results indicate that the theoretical detection probability closely matches the simulated values, validating the analytical expressions. Moreover, increasing $K$ significantly improves detection performance, demonstrating that the additional spatial diversity from more users effectively enhances sensing under a fixed total transmit power. This substantial improvement highlights the benefit of multi-user cooperation in the proposed detection scheme.

\section{Conclusions}
In this paper, we have proposed a multi-user collaborative uplink ISAC system. 
A closed-form lower bound on the ergodic sum rate, as well as an expression for the probability of detection, have been derived. Furthermore, a two-stage optimization framework has been proposed to maximize the sum rate by jointly optimizing pilot and data power allocation and receive beamforming. We optimized the pilot power allocation using a PDD-based gradient ascent method, while data power allocation and receive beamforming were handled via SCA.   
Simulation results demonstrated the benefits and superiority of the proposed optimization framework and collaborative sensing in the uplink ISAC networks.

\begin{appendices}
	\section{\label{sec:Proof-bar-gamma-k}Proof of Theorem~\ref{thm:bar gamma k}}
	If $\mathbf{u}_{k}= \frac{\mathbf{\hat z}_{k}}{\sqrt{\mathbb{E} \left\{||\mathbf{\hat z}_{k}||_2^2\right\}}},\ \forall k\in\mathcal{K}$ is used, we can obtain that
	\begin{align}
        \left|\mathbb{E}\left\{ \mathbf{u}_{k}^{\mathsf H}{{\mathbf{z}}_{k}} \right\}\right|^2
        =\text{tr}\left( {{\mathbf{R}}_{\text{est},k}} \right), \  
		\mathbb{E}\left\{ \left\| {{\mathbf{u}}_{k}} \right\|_{2}^{2} \right\}=1.
        \label{eq:u_k z_k 2}
	\end{align}
    For $\mathbb{E}\left\{ \left| \mathbf{u}_{k}^{\mathsf H}{{\mathbf{z}}_{i}} \right|^{2} \right\}$, we can obtain that 
    	\begin{align}
		\mathbb{E}\left\{ {{\left| \mathbf{u}_{k}^{\mathsf H}{{\mathbf{z}}_{i}} \right|}^{2}} \right\}\!= \!
		\begin{cases}
			\frac{\text{tr}\left( {{\mathbf{R}}_{i}}{{\mathbf{R}}_{\text{est},k}} \right)}{\text{tr}\left({\mathbf{R}}_{\text{est},k}\right)}, i\!\ne \!k, \\ 
			\frac{\mathbb{E}\left\{ \!{{\left| \mathbf{\hat{z}}_{k}^{\mathsf H}{{{\mathbf{\hat{z}}}}_{k}} \right|}^{2}} \!\right\}+\text{tr}\left( \left({{\mathbf{R}}_{i} - {\mathbf{R}}_{\text{est},k}}\right){{\mathbf{R}}_{\text{est},k}} \right)}{\text{tr}\left({\mathbf{R}}_{\text{est},k}\right)}, i\!=\!k, 
		\end{cases}
        \label{eq:u_k z_hat_k}
	\end{align}
    where ${{\mathbf{R}}_{i}}={{\mathbf{R}}_{\text{h},i}}+{{\mathbf{R}}_{\mathrm{g},i}}$.
	Then, based on
	${{\mathbf{\hat{z}}}_{k}}=\sqrt{{{P}_{\text{p},k}}}{{\mathbf{R}}_{k}}{{\left( {{P}_{\text{p},k}}{{T}_{\text{p}}}{{\mathbf{R}}_{k}}+\sigma^2\mathbf{I} \right)}^{-1}}{{\mathbf{y}}_{\text{p},k}}$,  $\mathbb{E}\left\{ {{\left| \mathbf{\hat{z}}_{k}^{\mathsf H}{{{\mathbf{\hat{z}}}}_{k}} \right|}^{2}} \right\}$ can be rewritten as
	\begin{align}
		\label{eq:z k hat z k hat}
		\mathbb{E}\left\{ {{\left| \mathbf{\hat{z}}_{k}^{\mathsf H}{{{\mathbf{\hat{z}}}}_{k}} \right|}^{2}} \right\}=P_{\text{p},k}^{2}\mathbb{E}\left\{ {{\left| \mathbf{y}_{\text{p},k}^{\mathsf H}\mathbf{C}_{k}^{-1}{{\mathbf{R}}_{k}}{{\mathbf{R}}_{k}}\mathbf{C}_{k}^{-1}{{\mathbf{y}}_{\text{p},k}} \right|}^{2}} \right\},
	\end{align}
	where ${{\mathbf{C}}_{k}}={{P}_{\text{p},k}}{{T}_{\text{p}}}{{\mathbf{R}}_{k}}+\sigma^2\mathbf{I}$.
	Based on~\eqref{eq:y_pk}, we can obtain  ${{\mathbf{y}}_{\text{p},k}}\sim \mathcal{C}\mathcal{N}\left( \mathbf{0},{{P}_{\text{p},k}}T_{\text{p}}^{2}{{\mathbf{R}}_{k}}+{{T}_{\text{p}}}\sigma^2\mathbf{I} \right)$.
	According to $\mathbb{E}\left\{ {{\left| {{\mathbf{a}}^{\mathsf H}}\mathbf{Ba} \right|}^{2}} \right\}={{\left| \text{tr}\left( \mathbf{BA} \right) \right|}^{2}}+\text{tr}\left( \mathbf{BA}{{\mathbf{B}}^{\mathsf H}}\mathbf{A} \right)$ with $\mathbf{a}\sim \mathcal{C}\mathcal{N}\left( 0,\mathbf{A} \right)$,~\eqref{eq:z k hat z k hat} can be expressed as
	\begin{align}
		\label{eq:z-hat-k 3}
		& \mathbb{E}\left\{ {{\left| \mathbf{\hat{z}}_{k}^{\mathsf H}{{{\mathbf{\hat{z}}}}_{k}} \right|}^{2}} \right\}
        ={{\left| \text{tr}\left( {{\mathbf{R}}_{\text{est},k}} \right) \right|}^{2}}+\text{tr}\left( {{\mathbf{R}}_{\text{est},k}}{{\mathbf{R}}_{\text{est},k}} \right).
	\end{align}	
	Substituting~\eqref{eq:z-hat-k 3} into~\eqref{eq:u_k z_hat_k}, we can obtain
	\begin{align}
		\label{eq:u_k z_i}
		\mathbb{E}\left\{ {{\left| \mathbf{u}_{k}^{\mathsf H}{{\mathbf{z}}_{i}} \right|}^{2}} \right\}= \!
		\begin{cases}
			\frac{\text{tr}\left( {{\mathbf{R}}_{i}}{{\mathbf{R}}_{\text{est},k}} \right)}{\text{tr}\left({\mathbf{R}}_{\text{est},k}\right)}, i\ne k, \\ 
			\frac{{{\left| \text{tr}\left( {{\mathbf{R}}_{\text{est},k}} \right) \right|}^{2}}+\text{tr}\left( {{\mathbf{R}}_{\text{est},k}}{{\mathbf{R}}_{\text{est},k}} \right)}{\text{tr}\left({\mathbf{R}}_{\text{est},k}\right)}, i=k. 
		\end{cases}
	\end{align}
	By substituting~\eqref{eq:u_k z_k 2} and~\eqref{eq:u_k z_i} into~\eqref{eq:bar gamma k}, ${{\bar{\gamma }}_{k}}$ in~\eqref{eq:bar gamma k 2} can be obtained. This completes the proof. 
	
	\section{\label{sec:Proof-grad_Pd}Proof of Theorem~\ref{thm:grad_Pd}}
	Based on~\eqref{eq:L}, for a given $\ell\in\mathcal{K}$, we have 
	\begin{align}
		\label{eq:grad_Pd 2}
		& \nabla_{P_{\mathrm{p},\ell}}\mathcal{L}_{\bm{\upomega},\xi}(\mathbf{p}_{\text{p}},\bm{\uptau})=\sum\nolimits_{k\in\mathcal{K}}\nabla_{P_{\mathrm{p},\ell}}{{\bar R}_{k}^{\text{lb}}}(\mathbf{p}_{\mathrm{p}})
		\notag \\ & 
		\ \ \ \ \ \ 
		-\sum\nolimits_{k\in\mathcal{K}}\left(\omega_{k}+\frac{1}{\xi}{f_k}\left(\mathbf{p}_{\text{p}},{{\tau }_{k}}\right)\right)\nabla_{P_{\mathrm{p},\ell}}f_{k}(\mathbf{p}_{\mathrm{p}},\tau_{k})
		\notag \\ &
		\ \ \ \ \ \ 
		-\left(\omega_{\mathrm{s}}+\frac{1}{\xi}{{f}_{\text{s}}}\left(\mathbf{p}_{\text{p}},{{\tau }_\text{s}}\right)\right)\nabla_{P_{\mathrm{p},\ell}}f_{\mathrm{s}}(\mathbf{p}_{\mathrm{p}},\tau_{\mathrm{s}}).
	\end{align}
	Then, $\nabla_{P_{\mathrm{p},\ell}}{{\bar R}_{k}^{\text{lb}}}(\mathbf{p}_{\mathrm{p}})$ can be written as
	\begin{align}
		\label{eq:grad_R_k}
		&\nabla_{P_{\mathrm{p},\ell}}{{\bar R}_{k}^{\text{lb}}}(\mathbf{p}_{\mathrm{p}})
        \!=\!\frac{{{T}_{\text{d}}}}{T\ln(2)}{{\nabla }_{P_{\mathrm{p},\ell}}}\ln \left(\! 1\!+\!\frac{{{P}_{\text{d},k}}\text{tr}\left( {{\mathbf{R}}_{\text{est},k}} \right)}{\sum\limits_{{i\in \mathcal{K}}}{{{P}_{\text{d},i}}\frac{\text{tr}\left( {{\mathbf{R}}_{i}}{{\mathbf{R}}_{\text{est},k}} \right)}{\text{tr}\left( {{\mathbf{R}}_{\text{est},k}} \right)}}\!+\!\sigma^2}\!\right)
		\notag \\ 
		&
        \! =\!\frac{{{T}_{\text{d}}}}{T\ln(2)}{{\nabla }_{P_{\mathrm{p},\ell}}}\left[
		\!-\!\ln\left(\sum\limits_{i\in\mathcal{K}}{{P}_{\text{d},i}}\text{tr}\left( {\mathbf{R}}_i {{\mathbf{R}}_{\text{est},k}} \right)\!+\!\sigma^2\text{tr}\left( {{\mathbf{R}}_{\text{est},k}} \right)\right)\right.
		\notag \\ 
		& 
		\left.\!+\!\ln\!\!\left(\!P_{\mathrm{d},k}\text{tr}^2\!\left(\! {{\mathbf{R}}_{\text{est},k}}\! \right)   \!+\!\sum\limits_{i\in\mathcal{K}}\!{{P}_{\text{d},i}}\text{tr}\!\left( \!{\mathbf{R}}_i {{\mathbf{R}}_{\text{est},k}} \!\right)\!+\!\sigma^2\text{tr}\!\left(\! {{\mathbf{R}}_{\text{est},k}} \!\right)\!\!\right)\!\!\right]\!.
	\end{align}
	In addition, $\nabla_{P_{\mathrm{p},\ell}}\mathbf{R}_{\mathrm{est},k} $ can be derived as
	\begin{align}
		\label{eq:grad_R_est}
		&\nabla_{P_{\mathrm{p},\ell}}\mathbf{R}_{\mathrm{est},k} = \nabla_{P_{\mathrm{p},\ell}} \left( {{P}_{\text{p},k}}{{T}_{\text{p}}}{{\mathbf{R}}_{k}}{\left( {{P}_{\text{p},k}}{{T}_{\text{p}}}{{\mathbf{R}}_{k}}+\sigma^2\mathbf{I} \right)^{-1}}{{\mathbf{R}}_{k}}\right)
		\notag \\ &
		\!= \!
		\begin{cases}
			T_{\mathrm{p}}\mathbf{R}_{\ell}\mathbf{C}_\ell^{-1}\mathbf{R}_{\ell}\!-\!P_{\mathrm{p},\ell}T_{\mathrm{p}}^2\mathbf{R}_{\ell}\mathbf{C}_\ell^{-1}\mathbf{R}_{\ell}\mathbf{C}_\ell^{-1}\mathbf{R}_{\ell}, \ell\!=\!k,
			\\
			0, 
			\ \ \ \ \ \ \ \ \ \ \ \ \ \ \ \ \ \ \ \ \ \ \ \ \ \ \ \ \ \ \ \ \ \ \ \ \ \ \ \ \ \ \ \ \ \ \ 
			\ell\! \neq \!k,
		\end{cases}
	\end{align}
	where
	$\mathbf{C}_\ell = {{P}_{\text{p},\ell}}{{T}_{\text{p}}}{{\mathbf{R}}_{\ell}}+\sigma^2\mathbf{I}$.
	According to~\eqref{eq:grad_R_k} and~\eqref{eq:grad_R_est}, the closed-form expression for 
	$\nabla_{P_{\mathrm{p},\ell}}{{\bar R}_{k}^{\text{lb}}}(\mathbf{p}_{\mathrm{p}})$
	can be obtained as
	\begin{align}
		\label{eq:grad_R_k 2}
		&\nabla_{P_{\mathrm{p},\ell}}{{\bar R}_{k}^{\text{lb}}}(\mathbf{p}_{\mathrm{p}}) = 
		\frac{{{T}_{\text{d}}}}{T\ln(2)}\left(-\frac{\sum\nolimits_{i\in\mathcal{K}}{{P}_{\text{d},i}}\text{tr}\left( {\mathbf{R}}_i \acute{\mathbf{R}_{\mathrm{est},k}} \right)}{t_{\text{D},k} }\right.
		\notag \\ &
		\ \ \ \ \ \ 
		\left.-\frac{\sigma^2\text{tr}\left( \acute{\mathbf{R}_{\mathrm{est},k}}\right)}{t_{\text{D},k} }+\frac{2P_{\mathrm{d},k}\text{tr}\left( {{\mathbf{R}}_{\text{est},k}} \right)\text{tr}\left( \acute{\mathbf{R}_{\mathrm{est},k}}\right)}{t_{\text{N},k} } \right.
		\notag \\ &
		\ \ \ \ \ \ 
		\left. +\frac{\sum\nolimits_{i\in\mathcal{K}}{{P}_{\text{d},i}}\text{tr}\left( {\mathbf{R}}_i \acute{\mathbf{R}_{\mathrm{est},k}} \right)}{t_{\text{N},k} }+\frac{\sigma^2\text{tr}\left( \acute{\mathbf{R}_{\mathrm{est},k}}\right)}{t_{\text{N},k} }\right),
	\end{align}
	where
	$\acute{\mathbf{R}_{\mathrm{est},k}} = \nabla_{P_{\mathrm{p},\ell}}\mathbf{R}_{\mathrm{est},k}$, 
	$t_{\text{D},k}  = \sum\nolimits_{i\in\mathcal{K}}{{P}_{\text{d},i}}\text{tr}\left( {\mathbf{R}}_i {{\mathbf{R}}_{\text{est},k}} \right)+\sigma^2\text{tr}\left( {{\mathbf{R}}_{\text{est},k}} \right)$,
	and 
	$t_{\text{N},k}  = P_{\mathrm{d},k}\text{tr}^2\left( {{\mathbf{R}}_{\text{est},k}} \right)   +\sum\nolimits_{i\in\mathcal{K}}{{P}_{\text{d},i}}\text{tr}\left( {\mathbf{R}}_i {{\mathbf{R}}_{\text{est},k}} \right)+\sigma^2\text{tr}\left( {{\mathbf{R}}_{\text{est},k}} \right)$.
	Similarly, we can obtain that
	\begin{align}
		\label{eq:grad_f_k}
		\nabla_{P_{\mathrm{p},\ell}}f_{k}(\mathbf{p}_{\mathrm{p}},\tau_{k}) = -\frac{1}{R_{\mathrm{th},k}}
		\nabla_{P_{\mathrm{p},\ell}}{{\bar R}_{k}^{\text{lb}}}(\mathbf{p}_{\mathrm{p}}).
	\end{align}
	At the end, 
	$\nabla_{P_{\mathrm{p},\ell}}f_{\mathrm{s}}(\mathbf{p}_{\mathrm{p}},\tau_{\mathrm{s}})$
	can be written as 
	\begin{align}
		\label{eq:grad_f_s}
		\nabla_{P_{\mathrm{p},\ell}}f_{\mathrm{s}}(\mathbf{p}_{\mathrm{p}},\tau_{\mathrm{s}}) 
		=\frac{{t_{\text{N,s}}\nabla_{P_{\mathrm{p},\ell}} t_{\text{D,s}}-t_{\text{D,s}}\nabla_{P_{\mathrm{p},\ell}} t_{\text{N,s}}}}{\gamma _{\text{s,th}} t_{\text{D,s}}^2},
	\end{align}
	where $t_{\text{N,s}}  = \sum\nolimits_{k\in \mathcal{K}}\left( {{T}_{\text{p}}}{{P}_{\text{p},k}}+{{T}_{\text{d}}}{{P}_{\text{d},k}} \right){{\mathbf{u}}_{\text{s}}^{\mathsf H}}{{\mathbf{R}}_{\hat{\mathrm{g}},k}}\mathbf{u}_{\text{s}}$, 
	$t_{\text{D,s}}  = \sum\nolimits_{k\in \mathcal{K}}{\left( {{T}_{\text{p}}}{{P}_{\text{p},k}}+{{T}_{\text{d}}}{{P}_{\text{d},k}} \right){{\mathbf{u}}_{\text{s}}^{\mathsf H}}{{\mathbf{R}}_{\text{err},k}}\mathbf{u}_{\text{s}}}+T\left\| {{\mathbf{u}}_{\text{s}}} \right\|_{2}^{2}\sigma^2$, and 
	${{\mathbf{R}}_{\text{err},k}} = {\mathbf{R}_k} - {{\mathbf{R}}_{\text{est},k}}=\sigma^2{{\mathbf{R}}_{k}}{\left( {{P}_{\text{p},k}}{{T}_{\text{p}}}{{\mathbf{R}}_{k}}+\sigma^2\mathbf{I} \right)^{-1}}$.
	Then, $\nabla_{P_{\mathrm{p},\ell}} t_{\text{N,s}}$ can be derived as 
	\begin{align}
		\label{eq:grad_t_Ns}
		&\nabla_{P_{\mathrm{p},\ell}} t_{\text{N,s}} 
		=  \left(2T_{\mathrm{p}}^{2}P_{\mathrm{p},\ell}+T_{\mathrm{d}}T_{\mathrm{p}}P_{\mathrm{d},\ell}\right) {{\mathbf{u}}_{\text{s}}^{\mathsf H}}{{\mathbf{R}}_{\mathrm{g},\ell}} \mathbf{C}_\ell^{-1}{{\mathbf{R}}_{\mathrm{g},\ell}}\mathbf{u}_{\text{s}}
		\notag \\ &
		-\left( {{T}_{\text{p}}^3}{{P}_{\text{p},\ell}^2}+{{T}_{\text{d}}}{{T}_{\text{p}}^2}{{P}_{\text{d},\ell}{{P}_{\text{p},\ell}}} \right) {{\mathbf{u}}_{\text{s}}^{\mathsf H}}{{\mathbf{R}}_{\mathrm{g},\ell}} \mathbf{C}_\ell^{-1} {{\mathbf{R}}_{\ell}} \mathbf{C}_\ell^{-1} {{\mathbf{R}}_{\mathrm{g},\ell}}\mathbf{u}_{\text{s}},
	\end{align}
	while $\ell=k$. And if $\ell \neq k$, $\nabla_{P_{\mathrm{p},\ell}} t_{\text{N,s}} = 0$.
	Similarly, $\nabla_{P_{\mathrm{p},\ell}} t_{\text{D,s}}$ can be obtained as
	\begin{align}
		\label{eq:grad_t_Ds}
		\nabla_{P_{\mathrm{p},\ell}} t_{\text{D,s}}
		=&-\sigma^2 \left( {{T}_{\text{p}}^2}{{P}_{\text{p},\ell}}+{{T}_{\text{p}}}{{T}_{\text{d}}}{{P}_{\text{d},\ell}} \right){{\mathbf{u}}_{\text{s}}^{\mathsf H}}{\mathbf{R}}_\ell {\mathbf{C}}_\ell^{-1}{\mathbf{R}}_\ell {\mathbf{C}}_\ell^{-1} {{\mathbf{u}}_{\text{s}}}
		\notag \\ &
		+\sigma^2 {{T}_{\text{p}}} {{\mathbf{u}}_{\text{s}}^{\mathsf H}} {\mathbf{R}}_\ell {\mathbf{C}}_\ell^{-1}{{\mathbf{u}}_{\text{s}}},
	\end{align}
	while $\ell=k$. And if $\ell \neq k$, $\nabla_{P_{\mathrm{p},\ell}} t_{\text{D,s}} = 0$.
	For the sake of simplicity, we define $\acute{t_{\text{N,s}}} =\nabla_{P_{\mathrm{p},\ell}} t_{\text{N,s}} $ and $\acute{t_{\text{D,s}}} =\nabla_{P_{\mathrm{p},\ell}} t_{\text{D,s}} $.
	Replacing ~\eqref{eq:grad_t_Ns} and~\eqref{eq:grad_t_Ds} in~\eqref{eq:grad_f_s},  $\nabla_{P_{\mathrm{p},\ell}}f_{\mathrm{s}}(\mathbf{p}_{\mathrm{p}},\tau_{\mathrm{s}})$ can be obtained. 
	Finally, the closed-form expression for $\nabla_{P_{\mathrm{p},\ell}}\mathcal{L}_{\bm{\upomega},\xi}(\mathbf{p}_{\text{p}},\bm{\uptau})$ in Theorem~\ref{thm:grad_Pd} is derived; this concludes the proof. 
\end{appendices}

% \balance 

% \input{references}

\bibliographystyle{IEEEtran}
\bibliography{references}

@ARTICLE{23-COMST-6G,
	author={Chafii, Marwa and Bariah, Lina and Muhaidat, Sami and Debbah, Merouane},
	journal={IEEE Commun. Surveys Tuts.}, 
	title={Twelve Scientific Challenges for 6{G}: {R}ethinking the Foundations of Communications Theory}, 
	year={2nd Quart., 2023},
	volume={25},
	number={2},
	pages={868-904}}

@ARTICLE{21-NETW-ISAC,
  author={Cui, Yuanhao and Liu, Fan and Jing, Xiaojun and Mu, Junsheng},
  journal={IEEE Netw.}, 
  title={Integrating Sensing and Communications for Ubiquitous {I}o{T}: {A}pplications, Trends, and Challenges}, 
  year={Sep./Oct. 2021},
  volume={35},
  number={5},
  pages={158-167},
  doi={10.1109/MNET.010.2100152}}

@ARTICLE{25-OJCOMS-ISAC-RIS,
  author={Magbool, Ahmed and Kumar, Vaibhav and Wu, Qingqing and Di Renzo, Marco and Flanagan, Mark F.},
  journal={IEEE Open J. Commun. Soc.}, 
  title={A Survey on Integrated Sensing and Communication With Intelligent Metasurfaces: {T}rends, Challenges, and Opportunities}, 
  year={2025},
  volume={6},
  number={},
  pages={7270-7318},
  doi={10.1109/OJCOMS.2025.3594049}}

@ARTICLE{20-NETW-6G,
  author={Saad, Walid and Bennis, Mehdi and Chen, Mingzhe},
  journal={IEEE Netw.}, 
  title={A Vision of 6{G} Wireless Systems: {A}pplications, Trends, Technologies, and Open Research Problems}, 
  year={May/Jun. 2020},
  volume={34},
  number={3},
  pages={134-142},
  doi={10.1109/MNET.001.1900287}}

@ARTICLE{22-COMST-ISAC,
  author={Liu, An and others},
  journal={IEEE Commun. Surveys Tuts.}, 
  title={A Survey on Fundamental Limits of Integrated Sensing and Communication}, 
  year={2nd Quart., 2022},
  volume={24},
  number={2},
  pages={994-1034},
  doi={10.1109/COMST.2022.3149272}}

@ARTICLE{23-IoTJ-ISAC,
  author={Wei, Zhiqing and others},
  journal={IEEE Internet Things J.}, 
  title={Integrated Sensing and Communication Signals Toward 5{G-A} and 6{G}: {A} Survey}, 
  year={Jul.2023},
  volume={10},
  number={13},
  pages={11068-11092},
  doi={10.1109/JIOT.2023.3235618}}

@ARTICLE{19-TVT-ISAC,
  author={Zhang, J. Andrew and Huang, Xiaojing and Guo, Y. Jay and Yuan, Jinhong and Heath, Robert W.},
  journal={IEEE Trans. Veh. Technol.}, 
  title={Multibeam for Joint Communication and Radar Sensing Using Steerable Analog Antenna Arrays}, 
  year={Jan. 2019},
  volume={68},
  number={1},
  pages={671-685},
  doi={10.1109/TVT.2018.2883796}}

@ARTICLE{22-JSAC-ISAC,
  author={Liu, Fan and others},
  journal={IEEE J. Sel. Areas Commun.}, 
  title={Integrated Sensing and Communications: {T}oward Dual-Functional Wireless Networks for 6{G} and Beyond}, 
  year={Jun. 2022},
  volume={40},
  number={6},
  pages={1728-1767}}

@ARTICLE{21-JSTSP-ISAC,
  author={Zhang, J. Andrew and others},
  journal={IEEE J. Sel. Top. Signal Process.}, 
  title={An Overview of Signal Processing Techniques for Joint Communication and Radar Sensing}, 
  year={Nov. 2021},
  volume={15},
  number={6},
  pages={1295-1315}}

@ARTICLE{21-VTM-ISAC-PMN,
  author={Zhang, Andrew and Rahman, Md. Lushanur and Huang, Xiaojing and Guo, Yingjie Jay and Chen, Shanzhi and Heath, Robert W.},
  journal={IEEE Veh. Technol. Mag.}, 
  title={Perceptive Mobile Networks: {C}ellular Networks With Radio Vision via Joint Communication and Radar Sensing}, 
  year={Jun. 2021},
  volume={16},
  number={2},
  pages={20-30},
  doi={10.1109/MVT.2020.3037430}}

@ARTICLE{22-IDAC-UL,
  author={Ouyang, Chongjun and Liu, Yuanwei and Yang, Hongwen},
  journal={IEEE Commun. Lett.}, 
  title={On the Performance of Uplink {ISAC} Systems}, 
  year={Aug. 2022},
  volume={26},
  number={8},
  pages={1769-1773},
  doi={10.1109/LCOMM.2022.3178193}}

@ARTICLE{11-PI-ISAC-Wave,
  author={Sturm, Christian and Wiesbeck, Werner},
  journal={Proc. IEEE}, 
  title={Waveform Design and Signal Processing Aspects for Fusion of Wireless Communications and Radar Sensing}, 
  year={Jul. 2011},
  volume={99},
  number={7},
  pages={1236-1259},
  doi={10.1109/JPROC.2011.2131110}}

@ARTICLE{20-TCOM-ISAC,
  author={Liu, Fan and Masouros, Christos and Petropulu, Athina P. and Griffiths, Hugh and Hanzo, Lajos},
  journal={IEEE Trans. Commun.}, 
  title={Joint Radar and Communication Design: {A}pplications, State-of-the-Art, and the Road Ahead}, 
  year={Jun. 2020},
  volume={68},
  number={6},
  pages={3834-3862},
  doi={10.1109/TCOMM.2020.2973976}}

@ARTICLE{21-JSTSP-ISAC-OTFS,
  author={Yuan, Weijie and Wei, Zhiqiang and Li, Shuangyang and Yuan, Jinhong and Ng, Derrick Wing Kwan},
  journal={IEEE J. Sel. Topics Signal Process.}, 
  title={Integrated Sensing and Communication-Assisted Orthogonal Time Frequency Space Transmission for Vehicular Networks}, 
  year={Nov. 2021},
  volume={15},
  number={6},
  pages={1515-1528},
  doi={10.1109/JSTSP.2021.3117404}}

@ARTICLE{22-COMST-ISAC-MN,
  author={Zhang, J. Andrew and Rahman, Md. Lushanur and Wu, Kai and Huang, Xiaojing and Guo, Y. Jay and Chen, Shanzhi and Yuan, Jinhong},
  journal={IEEE Commun. Surv. Tut.}, 
  title={Enabling Joint Communication and Radar Sensing in Mobile Networks—{A} Survey}, 
  year={1st Quart., 2022},
  volume={24},
  number={1},
  pages={306-345},
  doi={10.1109/COMST.2021.3122519}}

@ARTICLE{22-TVT-ISAC-RIS,
  author={Wang, Xinyi and Fei, Zesong and Huang, Jingxuan and Yu, Hanxiao},
  journal={IEEE Trans. Veh. Technol.}, 
  title={Joint Waveform and Discrete Phase Shift Design for {RIS}-Assisted Integrated Sensing and Communication System Under {C}ramer-{R}ao Bound Constraint}, 
  year={Jan. 2022},
  volume={71},
  number={1},
  pages={1004-1009},
  doi={10.1109/TVT.2021.3122889}}

@ARTICLE{23-TWC-ISAC-UAV,
  author={Meng, Kaitao and Wu, Qingqing and Ma, Shaodan and Chen, Wen and Wang, Kunlun and Li, Jun},
  journal={IEEE Trans. Wireless Commun.}, 
  title={Throughput Maximization for {UAV}-Enabled Integrated Periodic Sensing and Communication}, 
  year={Jan. 2023},
  volume={22},
  number={1},
  pages={671-687},
  doi={10.1109/TWC.2022.3197623}}

@ARTICLE{23-TVT-ISAC,
  author={Hua, Haocheng and Xu, Jie and Han, Tony Xiao},
  journal={IEEE Trans. Veh. Technol.}, 
  title={Optimal Transmit Beamforming for Integrated Sensing and Communication}, 
  year={Aug. 2023},
  volume={72},
  number={8},
  pages={10588-10603},
  doi={10.1109/TVT.2023.3262513}}

@ARTICLE{18-TSP-ISAC,
  author={Liu, Fan and Zhou, Longfei and Masouros, Christos and Li, Ang and Luo, Wu and Petropulu, Athina},
  journal={IEEE Trans. Signal Process.}, 
  title={Toward Dual-functional Radar-Communication Systems: {O}ptimal Waveform Design}, 
  year={Aug. 2018},
  volume={66},
  number={16},
  pages={4264-4279},
  doi={10.1109/TSP.2018.2847648}}

@ARTICLE{22-JSAC-ISAC-FD,
  author={Xiao, Zhiqiang and Zeng, Yong},
  journal={IEEE J. Sel. Areas Commun.}, 
  title={Waveform Design and Performance Analysis for Full-Duplex Integrated Sensing and Communication}, 
  year={Jun. 2022},
  volume={40},
  number={6},
  pages={1823-1837},
  doi={10.1109/JSAC.2022.3155509}}

@ARTICLE{23-TWC-ISAC,
  author={An, Jiancheng and Li, Hongbin and Ng, Derrick Wing Kwan and Yuen, Chau},
  journal={IEEE Trans. Wireless Commun.}, 
  title={Fundamental Detection Probability vs. Achievable Rate Tradeoff in Integrated Sensing and Communication Systems}, 
  year={Dec. 2023},
  volume={22},
  number={12},
  pages={9835-9853},
  doi={10.1109/TWC.2023.3273850}}

@ARTICLE{24-TWC-ISAC-Pilot,
  author={Hua, Meng and Wu, Qingqing and Chen, Wen and Jamalipour, Abbas and Wu, Celimuge and Dobre, Octavia A.},
  journal={IEEE Trans. Wireless Commun.}, 
  title={Integrated Sensing and Communication: {J}oint Pilot and Transmission Design}, 
  year={Nov. 2024},
  volume={23},
  number={11},
  pages={16017-16032},
  doi={10.1109/TWC.2024.3435912}}

@ARTICLE{22-TSP-ISAC-RIS,
  author={Yu, Zhouyuan and Hu, Xiaoling and Liu, Chenxi and Peng, Mugen and Zhong, Caijun},
  journal={IEEE Trans. Signal Process.}, 
  title={Location Sensing and Beamforming Design for {IRS}-Enabled Multi-User {ISAC} Systems}, 
  year={2022},
  volume={70},
  number={},
  pages={5178-5193},
  doi={10.1109/TSP.2022.3217353}}

@ARTICLE{24-TVT-ISAC,
  author={Ni, Zhitong and Zhang, J. Andrew and Huang, Xiaojing and Liu, Ren Ping},
  journal={IEEE Trans. Veh. Technol.}, 
  title={Frequency-Time Resource Allocation for Multiuser Uplink {ISAC} Systems}, 
  year={Dec. 2024},
  volume={73},
  number={12},
  pages={18893-18906},
  doi={10.1109/TVT.2024.3437683}}

@ARTICLE{25-TCCN-ISAC-UL,
  author={Li, Yiheng and Wei, Zhiqing and Liu, Haotian and Feng, Zhiyong},
  journal={IEEE Trans. Cogn. Commun. Netw.}, 
  title={Uplink Multi-{RSU} Cooperative Sensing Strategy for Integrated Sensing and Communication System}, 
  year={Oct. 2025},
  volume={11},
  number={5},
  pages={3112-3127},
  doi={10.1109/TCCN.2025.3570449}}

@article{17-bjornson-massiveMIMO,
  title={Massive {MIMO} networks: {S}pectral, energy, and hardware efficiency},
  author={Bj{\"o}rnson, Emil and Hoydis, Jakob and Sanguinetti, Luca},
  journal={Found. Trends Signal Process.},
  volume={11},
  number={3-4},
  pages={154--655},
  year={2017},
  publisher={Now Publishers, Inc.}
}

@book{98-SP-Dectect,
  author    = {Steven M. Kay},
  title     = {Fundamentals of Statistical Signal Processing, Volume II: Detection Theory},
  publisher = {Prentice-Hall},
  address   = {Englewood Cliffs, NJ, USA},
  year      = {1998}
}

@INPROCEEDINGS{25-ICC-ISCPT,
  author={Kumar, Vaibhav and Chafii, Marwa},
  booktitle={Proc. IEEE Int. Conf. Commun. (ICC)}, 
  title={Green Integration of Sensing, Communication, and Power Transfer via {STAR-RIS}}, 
  year={Jun. 2025},
  address={Montreal, Canada},
  volume={},
  number={},
  pages={2894-2899},
  doi={10.1109/ICC52391.2025.11161193}}

@article{66-Mini,
  title={Minimization of functions having {L}ipschitz continuous first partial derivatives},
  author={Armijo, Larry},
  journal={Pacific J. Math.},
  volume={16},
  number={1},
  pages={1--3},
  year={1966},
  publisher={Mathematical Sciences Publishers}
}

@INPROCEEDINGS{25-WCNC-RIS-MISO,
  author={Bahingayi, Eduard E. and Perović, Nemanja Stefan and Tran, Le-Nam},
  booktitle={Proc. IEEE Wireless Commun. Netw. Conf. (WCNC)}, 
  title={Weighted Sum-Rate Maximization for Large-Scale {RIS}-Assisted Multi-User {MISO} Systems}, 
  year={Mar. 2025},
  address={Milan, Italy},
  volume={},
  number={},
  pages={1-6},
  doi={10.1109/WCNC61545.2025.10978181}}

@ARTICLE{23-WCL-RIS,
  author={Kumar, Vaibhav and Zhang, Rui and Renzo, Marco Di and Tran, Le-Nam},
  journal={IEEE Wireless Commun. Lett.}, 
  title={A Novel {SCA}-Based Method for Beamforming Optimization in {IRS/RIS}-Assisted {MU-MISO} Downlink}, 
  year={Feb. 2023},
  volume={12},
  number={2},
  pages={297-301},
  doi={10.1109/LWC.2022.3224316}}

\end{document}